\newif\ifprocs
\renewcommand{\paragraph}{\subparagraph}
\newtheorem{theorem}{Theorem}[section]
\newtheorem{lemma}[theorem]{Lemma}
\theoremstyle{plain}
\newcommand{\tuple}[1]{{\langle{#1}\rangle}}
\newcommand{\set}[1]{\left\{ #1 \right\}}
\newcommand\abs[1]{\vert {#1}\vert}
\DeclareMathOperator{\poly}{poly}
\newcommand{\OPT}{\operatorname{OPT}{}}
\newcommand{\opt}{\operatorname{opt}{}}
\DeclareMathOperator{\A}{Arora}
\newcommand{\vecangle}[1]{\operatorname{angle}(#1)}
\newcommand{\R}{\mathbb R}
\newcommand{\B}{\{ 0,1 \}}
\newcommand{\E}{\mathcal E}
\renewcommand{\B}{\mathcal B}
\newcommand{\W}{\mathcal W}
\newcommand{\bpi}{\pmb \pi}
\newcommand{\eps}{\epsilon}
\newcommand{\w}{\bar w}
\newcommand{\EW}{\mathcal E^{\text{win}}}
\newcommand\norm[1]{\left\lVert#1\right\rVert}
\providecommand{\keywords}[1]
{
  \small	
  \textbf{Keywords.} #1
}
\title{Faster Algorithms for Orienteering and $k$-TSP}
\date{}
\author{Lee-Ad Gottlieb}{Ariel University, Israel}{leead@ariel.ac.il}{}{}
\author{Robert Krauthgamer}{Weizmann Institute of Science, Israel}{robert.krauthgamer@weizmann.ac.il}{}{}
\author{Havana Rika}{Weizmann Institute of Science, Israel}{havana.rika@weizmann.ac.il}{}{}
\authorrunning{L. Gottlieb and R. Krauthgamer and H. Rika} 
\keywords{orienteering, $k$-TSP, plane sweep algorithm}
\author{Lee-Ad Gottlieb%
\thanks{Ariel University, Israel. Work supported by the Israel Science Foundation grant \#1602/19.
Email: \texttt{leead@ariel.ac.il}
}
\\
Ariel University
\and
Robert Krauthgamer%
\thanks{Weizmann Institute of Science.
  Work partially supported by ONR Award N00014-18-1-2364, the Israel Science Foundation grant \#1086/18, and a Minerva Foundation grant.
  Part of this work was done while the author was visiting the Simons Institute for the Theory of Computing.
  Email: \texttt{robert.krauthgamer@weizmann.ac.il}
}
\\
Weizmann Institute
\and
Havana Rika%
\thanks{Weizmann Institute of Science, Israel.
Email: \texttt{havana.rika@weizmann.ac.il}
}
\\
Weizmann Institute
}
\begin{document}
\maketitle

\begin{abstract}

We consider the rooted orienteering problem in Euclidean space:
Given $n$ points $P$ in $\R^d$, a root point $s\in P$ and a budget $\B>0$,
find a path that starts from $s$, has total length at most $\B$,
and visits as many points of $P$ as possible.
This problem is known to be NP-hard,
hence we study $(1-\delta)$-approximation algorithms.
The previous Polynomial-Time Approximation Scheme (PTAS) for this problem,
due to Chen and Har-Peled (2008), runs in time
$n^{O(d\sqrt{d}/\delta)} 2^{(d/\delta)^{O(d)}}$,
and improving on this time bound was left as an open problem.
Our main contribution is a PTAS with a significantly improved time complexity of
$n^{O(1/\delta)} 2^{(d/\delta)^{O(d)}}$.

A known technique for approximating the orienteering problem is to reduce it
to solving $1/\delta$ correlated instances of rooted $k$-TSP
(a $k$-TSP tour is one that visits at least $k$ points).
However, the $k$-TSP tours in this reduction must achieve a certain excess guarantee (namely, their length can surpass the optimum length only
in proportion to a parameter of the optimum called excess)
that is stronger than the usual $(1+\delta)$-approximation.
Our main technical contribution is to improve the running time of these
$k$-TSP variants, particularly in its dependence on the dimension $d$.
Indeed, our running time is polynomial even for a moderately large dimension,
roughly up to $d=O(\log\log n)$ instead of $d=O(1)$.

\end{abstract}

\keywords{orienteering; $k$-TSP; plane sweep algorithm;}

\section{Introduction}
The Traveling Salesman Problem (TSP) is of fundamental importance
to combinatorial optimization, computer science and operations research.
It is a prototypical problem for planning routes in almost any context,
from logistics to manufacturing, and is therefore studied extensively.
In this problem, the input is a list of cities (aka sites)
and their pairwise distances,
and the goal is to find a (closed) tour of minimum length
that visits all the sites.
This problem is known to be NP-hard even in the Euclidean case~\cite{GGJ76,Papadimitriou77,Trevisan00},
which is the focus of our work.

One important variant of TSP are \emph{orienteering} problems,
which ask to maximize the number of sites visited
when the (closed or open) tour length is constrained by a given budget.
These problems model scenarios where the ``salesman''
has limited resources, such as gasoline, time or battery-life.
This genre is related to prize-collecting traveling salesman problems,
introduced by Balas~\cite{Balas95},
where the sites are also associated with non-negative ``prize'' values,
and the goal is to visit a subset of the sites while minimizing the total distance traveled and maximizing the total amount of prize collected.
Note that there is a trade-off between the cost of a tour and how much prize it spans.
Another related family is the vehicle routing problem (VRP)~\cite{TV02},
where the goal is to find optimal routes for multiple vehicles visiting a set of sites. These problems arise from real-world applications such as delivering goods to locations or assigning technicians to maintenance jobs.

We consider the \emph{rooted} orienteering problem in Euclidean space,
in which the input is a set of $n$ points $P$ in $\R^d$,
a starting point $s$ and a budget $\B>0$,
and the goal is to find a path that starts at $s$ and visits as many points
of $P$ as possible, such that the path length is at most $\B$.
A $(1-\delta)$-approximate solution is a path satisfying these constraints
(start at $s$ and have length at most $\B$)
that visits at least $(1-\delta)k_{\opt}$ points,
where $k_{\opt}$ denotes the maximum possible,
i.e., the number of points visited by an optimal path.

Arkin, Mitchell, and Narasimhan~\cite{AMN98} designed the first
approximation algorithms for the rooted orienteering problem.
They considered this problem for points in the Euclidean plane
when the desired ``tour'' (network in their context) is a path, a cycle, or a tree,
and achieved a $O(1)$--approximation for these problems.
Blum et al.~\cite{BCKLMM07} and Bansal et al.~\cite{BBCM04}
designed an $O(1)$-approximation algorithm
for rooted path orienteering when the points lie in a general metric space.


Chen and Har-Peled~\cite{CH08} were the first
to design a Polynomial-Time Approximation Scheme (PTAS),
i.e., a $(1-\delta)$-approximation algorithm for every fixed $\delta>0$,
when the points lie in a Euclidean space of fixed dimension.
Their algorithm reduces the orienteering problem into
(a multi-path version of) rooted $k$-TSP,
and thus the heart of their algorithm is a PTAS for the latter,
where the approximation is actually with respect to a parameter called \emph{excess}, which can be much smaller than the optimal tour length.
This follows an earlier approach of Blum et al.~\cite{BCKLMM07},
who introduced the concept of excess-based approximation,
and designed a reduction to a simpler (single-path version of) $k$-TSP.
However, that earlier reduction increases the approximation ratio
by a constant factor and cannot yield a PTAS.
Chen and Har-Peled~\cite{CH08} presented a different reduction,
to a more complicated (multi-path) version of rooted $k$-TSP,
and for the latter problem they designed an algorithm that
cleverly combines two very different divide-and-conquer methods,
of Arora~\cite{Arora98} and of Mitchell~\cite{Mitchell99}.
As they point out, a key difficulty in this problem
is the relative lack of algorithmic tools to handle rigid budget constraints.

We design a PTAS for the rooted orienteering problem that has a better running time than
the known running time
$n^{O(d\sqrt{d}/\delta)}(\sqrt{d} \log n / \delta)^{(d/\delta)^{O(d)}}
= n^{O(d\sqrt{d}/\delta)}(\log n)^{(d/\delta)^{O(d)}}
= n^{O(d\sqrt{d}/\delta)} 2^{(d/\delta)^{O(d)}}$
of Chen and Har-Peled~\cite{CH08}.\footnote{The final equality follows from the fact
that for all $n,x \ge 2$ we have
$(\log n)^x
= 2^{(\log \log n) x}
< 2^{(\log \log n)^2 + x^2}
< n 2^{x^2}$.}
For fixed $\delta$ and small dimension $d$,
the leading term in their running time is about $n^{O(d\sqrt{d}/\delta)}$,
which we improve to $n^{O(1/\delta)}$.
Thanks to this improvement, our algorithm is polynomial even for a moderately large dimension,
roughly up to $d=O(\log\log n)$ instead of $d=O(1)$.\footnote{We note that both our results
and those of \cite{CH08} hold also for the orienteering variant where the input is a pair of
endpoints $s,t$ (instead of just a root point $s$), and the solution is a path connecting
$s$ and $t$.}

\subsection{Our Results}

Our main result is a PTAS for the rooted orienteering problem,
with improved running time compared to that of Chen and Har-Peled~\cite{CH08}.

\begin{theorem}\label{THM: PTAS for Orienteering}
Given as input a set $P$ of $n$ points in $\R^d$, a starting point $s$,
a budget $\B>0$, and an accuracy parameter $\delta\in(0,1)$,
one can compute in time
$n^{O(1/\delta)} 2^{(d/\delta)^{O(d)}}$,
a path that starts at $s$, has length at most $\B$,
and visits at least $(1-\delta)k_{\opt}$ points of $P$,
where $k_{\opt}$ is the maximum possible number of points
that can be visited under these constraints.
\end{theorem}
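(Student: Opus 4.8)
The plan is to follow the two-layer strategy of Blum et al.\ and Chen and Har-Peled --- a reduction from orienteering to rooted $k$-TSP with an excess guarantee, fed into a solver for that problem --- but to replace the solver by a faster one whose dimension dependence sits entirely in a $(\log n)$-exponent. For the reduction layer (which is essentially that of \cite{CH08}), I would first guess $k_{\opt}$: rounding it down to the nearest power of $1+\delta$ leaves only $O(\delta^{-1}\log n)$ candidates, costing a $(1+\delta)$ factor. Fix such a guess $k$. The unknown optimal path $\pi^\ast$ has length $\le\B$; cut it at $\Theta(1/\delta)$ breakpoints $s=b_0,b_1,\dots,b_{1/\delta}$ so that each arc $\pi^\ast_i$ (from $b_{i-1}$ to $b_i$) visits about $\delta k$ points, and let $e_i:=\mathrm{length}(\pi^\ast_i)-\dist(b_{i-1},b_i)$ be its excess. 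Guessing each $b_i$ up to a $\delta$-fine net ($\poly(n)$ options per breakpoint, so $n^{O(1/\delta)}$ in all), together with the associated point-counts and excess bounds, reduces the task to $\Theta(1/\delta)$ instances of: \emph{given two points and an excess bound $\eps$, find a path between them of excess at most $\eps$ visiting as many input points as possible.} If each instance is solved with excess at most $(1+\delta)$ times the optimum for that instance, then concatenating the arcs produces a path whose length overshoots $\B$ by at most $\delta\sum_i e_i\le\delta\cdot\mathrm{length}(\pi^\ast)\le\delta\B$; rescaling $\delta$ and trimming the last arc restores the exact budget while discarding only an $O(\delta)$ fraction of the visited points. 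This layer alone contributes the $n^{O(1/\delta)}$ factor.

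The core is therefore a fast PTAS for rooted $k$-TSP with the excess guarantee. I would build Arora's randomly-shifted dyadic dissection, but calibrated to the excess rather than to the tour length: after snapping the input to a grid and ``straightening'' the backbone segment joining the two prescribed endpoints, set the dissection granularity to $\Theta(\delta\,\eps/\sqrt d)$ (using the guessed value of the optimal excess $\eps$), so that Arora's patching lemma turns an optimal excess-respecting path into a \emph{portal-respecting} one while spending only an $O(\delta)$ fraction of $\eps$. The resulting structure theorem gives a near-optimal path crossing each facet of each dissection cell only at a set of $m=(\sqrt d\log n/\delta)^{O(d)}$ portals, each at most $O(\sqrt d/\delta)$ times. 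A dynamic program over the dissection tree then computes, for each of the $O(n\log n)$ relevant cells $C$, each legal interface pattern on $C$'s $2dm$ portals, and each rounded value of the excess used strictly inside $C$, the maximum number of input points coverable by a sub-path realizing that interface and excess; cells combine their children's tables by matching shared portals. As in Arora's PTAS, the per-cell table --- over interface patterns times excess buckets --- has size $(\log n)^{(d/\delta)^{O(d)}}$, so this part meets the target running time \emph{provided} the DP does not additionally enumerate coverage counts.

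Guaranteeing exactly that is the main obstacle, and the reason our running time beats the $n^{O(d\sqrt d/\delta)}$ of \cite{CH08}. Inside a cell the sub-path breaks, at its portals, into up to $(d/\delta)^{O(d)}$ portal-to-portal pieces, each covering some number of input points in $\{0,\dots,n\}$; guessing those numbers separately would inflate the base-$n$ exponent by a dimension-dependent factor, and even splitting the required count among a cell's $2^d$ children would cost $n^{2^d}$, already super-polynomial once $d=\omega(1)$. My remedy is to remove per-cell count guessing entirely and control coverage \emph{globally}: fix once and for all a traversal order of the leaf cells (e.g.\ the order induced by an Euler tour of the dissection tree), and guess only the $O(1/\delta)$ ``checkpoint'' cells in which the cumulative number of points covered by $\pi^\ast$ along that order first reaches $\delta k, 2\delta k,\dots, k$. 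These $O(1/\delta)$ checkpoints again cost $n^{O(1/\delta)}$ --- absorbed into the factor already paid --- and they \emph{determine}, up to rounding, how many points the sub-path must cover inside any given cell (namely $\delta k$ times the number of checkpoints in that cell's sub-order, with $O(1)$ boundary corrections), so the DP state stays purely geometric. What then remains is analytic: (i) show that rounding the per-cell coverage requirements to multiples of $\delta k$ over the number of cells loses only a $(1-O(\delta))$ factor, by a telescoping/packing argument along the dissection; (ii) verify that the excess-calibrated dissection makes every patching and portal-snapping step cost an $O(\delta)$ fraction of $\eps$, so the concatenated path is within $(1+\delta)$ of the optimal excess; and (iii) check that the breakpoint and checkpoint guesses can be imposed as hard constraints inside the same DP. Composing the reduction with this solver and ranging over the $O(\delta^{-1}\log n)$ guesses of $k_{\opt}$ yields the stated bound $n^{O(1/\delta)}(\log n)^{(d/\delta)^{O(d)}}$.
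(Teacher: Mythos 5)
Your overall two-layer plan (reduce orienteering to excess-approximate rooted $k$-TSP, then solve the latter) matches the paper's strategy, but the core solver you propose is a genuinely different construction, and I do not believe it works as described. The paper does not attempt to make Arora's dissection ``excess-aware''; it uses a plane-sweep dynamic program over axis-orthogonal windows (each window determined by only two input points, giving $O(n^2)$ windows), runs Arora's $k$-TSP as a subroutine only inside those windows, and proves that the total length routed through the windows is $O(\E(\pi^*))$, so a multiplicative $(1+\delta')$ loss there becomes an additive $O(\delta')\E(\pi^*)$ loss. Your alternative --- set the dissection granularity to $\Theta(\delta\eps/\sqrt d)$ and argue the patching and portal-snapping cost is $O(\delta)\eps$ --- fails at exactly the step you flag as ``(ii) verify.'' The snapping/patching cost in Arora's structure theorem is governed by the total number of times the tour crosses dissection boundaries, which scales as $\OPT$ divided by the grid unit; refining the grid to scale $\eps$ multiplies the number of crossings by $\OPT/\eps$ while dividing the per-crossing cost by the same factor, leaving the total at $\Theta(\delta\cdot\OPT)$, not $\Theta(\delta\cdot\eps)$. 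When $\eps\ll\OPT$ this gives only $(1+\delta)\OPT$, which is insufficient, and this is precisely the obstacle that motivated the windowing machinery in Chen--Har-Peled and here.

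Two further gaps. First, in the reduction layer: concatenating $(1+\delta)$-excess approximations of the $\Theta(1/\delta)$ arcs and then ``trimming the last arc'' to restore the budget is not sound --- the overshoot is up to $\delta\B$, which may far exceed the length of any single arc, and trimming a length-$\delta\B$ suffix could delete far more than a $\delta$ fraction of the visited points. The paper instead \emph{drops} the subpath of largest excess $\E_\nu\ge\frac1m\sum_i\E_i$ and shortcuts it; this both keeps $(1-1/m)k_{\opt}$ points and frees a budget reserve $\E_\nu$ that absorbs the oracle's additive error $\frac{1}{m+1}(\sum_i\E_i-\E_\nu)$, so the final path length is $\le\norm{\pi^*}\le\B$ with no trimming needed. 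Second, the ``global checkpoint'' device for avoiding per-cell coverage enumeration is not well-defined as stated: the Euler-tour order of leaf cells bears no fixed relation to the order in which the optimal path visits cells, so guessing $O(1/\delta)$ cumulative-count thresholds along the Euler tour does not determine how many points the path must cover inside an internal DP cell. The paper avoids per-cell coverage guessing differently: the sweep carries the visited-count $k'\in[k]$ as one DP coordinate, and each window-level Arora call is the modified $k$-TSP version that already tabulates tours for all $k'\le k$ at an additional $k^2$ factor (as noted in the Preliminaries), so the count bookkeeping never multiplies the base-$n$ exponent by a dimension-dependent factor.
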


Similarly to Chen and Har-Peled~\cite{CH08}, our algorithm reduces
the rooted orienteering problem to (a multi-path version of) rooted $k$-TSP,
and the main challenge is to solve the latter problem
with good approximation with respect to the excess parameter.
Their algorithm for $k$-TSP uses Mitchell's divide-and-conquer method
\cite{Mitchell99} based on splitting the space into windows.
These windows contain subpaths of the $k$-TSP path,
and the algorithm finds such subpaths in every window
and then combines them into the requested path.
The leading term $n^{\poly(d)}$ in the running time of Chen and Har-Peled~\cite{CH08}
arises from defining each window via $2d$ independent hyperplanes,
which gives rise to $n^{O(d)}$ possible windows.
We define and order the windows in a way that is similar to, but different from, Blum et al.~\cite{BCKLMM07},
which yields at most $n^2$ different windows (see Section~\ref{sec:mk-tsp} for more details).
This improvement is readily seen
in our first technical result (Theorem~\ref{THM: PTAS for k-TSP}),
which provides a PTAS for (a simple version of) rooted $k$-TSP.
For fixed $\delta$ and small dimension $d$,
the leading term in our running time is $n^{O(1)}$.
For our main result, we need to solve a multi-path version
that we call rooted $(m,k)$-TSP.
This problem asks to find $m$ paths that visit $k$ points in total,
when the input prescribes the endpoints of all these $m$ paths
(see Theorem~\ref{THM: PTAS for (m,k)-TSP}). We note that although our result for $k$-TSP can be obtained also by applying the techniques of Blum et al.~\cite{BCKLMM07}, our version of the windows is essential for solving the rooted $(m,k)$-TSP,
and is thus needed to obtain a PTAS for orienteering.

\subsection{Related Work}

The orienteering problem was first introduced by Golden et al.~\cite{GLV87}, and intensely studied since then.
The problem has numerous variants.
For example, Chekuri et al.~\cite{CKP12} designed $(2+\eps)$-approximation algorithm for orienteering in undirected graphs,
and an $O(\log^2 \OPT)$-approximation algorithm in directed graphs.
Gupta et al.~\cite{GKNR15} designed an $O(1)$-approximation algorithm for the best non-adaptive policy for stochastic orienteering. Friggstad et al.~\cite{FS17} introduced the first polynomial-size LP-relaxations for the orienteering problem and its rooted version, and obtained  $O(1)$-approximation algorithms via LP-rounding.
For algorithms for orienteering with deadlines and time-windows see~\cite{BBCM04,CK04,CP05}. A survey on orienteering can be found in~\cite{gunawan16}, and a survey on the vehicle routing problem with profits can be found in~\cite{archetti14}.

\section{Preliminaries}
\label{sec:prelims}

\paragraph{Notation.}
Let $\pi=\langle p_1,\ldots,p_k \rangle$ be a path that visits $k$ points of $P$ in $\R^d$, starting at $p_1$ and ending at $p_k$.
The \emph{length} of $\pi$ is denoted by $\norm{\pi}:=\sum_{j=1}^{k-1}\norm{p_{j+1}-p_j}$,
and let $P(\pi)$ be all the points in $P$ that are visited by $\pi$.
Define the {\em excess} of $\pi$ to be
$$\E(\pi):= \| \pi \| - \| p_k - p_1 \|.$$
Note that the excess of $\pi$ may be considerably smaller than the length of $\pi$.
Similarly, given a set $\Pi$ of $m$ paths, such that each path $\pi_i$, $i\in[m]$, connects endpoints $s_i,t_i$,
we denote the total length of its $m$ paths by
$\norm{\Pi}:=\sum_{i=1}^m \norm{\pi_i}$.
Let $P(\Pi)$ be all points visited by $\Pi$, i.e. $P(\Pi)=\cup_{i=1}^m P(\pi_i)$,
and let the excess of $\Pi$ be
$\E(\Pi):=\sum_{i=1}^m (\norm{\pi_i}-\norm{t_i-s_i})$.

Given a set $P$ of $n$ points and $m$ pairs $s_i,t_i\in P$,
the \emph{rooted $(m,k)$-TSP problem} is to find a set of $m$ paths
$\Pi=\{\pi_i|\ i\in[m]\}$ with minimum total length,
such that each path $\pi_i$ connects endpoints $s_i,t_i$, and $|P(\Pi)|=k$.
A \emph{$\delta$-excess-approximation} to the rooted $(m,k)$-TSP problem is a set of $m$ paths $\Pi=\{\pi_i|\ i\in[m]\}$,
such that each path $\pi_i$ connects endpoints $s_i,t_i$, $|P(\Pi)|=k$,
and $\norm{\Pi} \leq \norm{\Pi^*}+\delta\cdot \E(\Pi^*)$,
where $\Pi^*$ is a solution of minimum length. We define the
\emph{rooted $k$-TSP problem} to be the rooted $(1,k)$-TSP problem.

Given a set $P$ of $n$ points, a budget $\B$, and a starting point $s$,
the \emph{rooted orienteering problem} is the problem of finding a path $\pi^*$
rooted at $s$ which visits the maximum number of points of $P$
under the constraint that $\norm{\pi^*}\leq \B$. Let
$k_{\opt}$ denote the number of points visited by $\pi^*$.
A $(1-\delta)$-approximation to the rooted orienteering problem is a path $\pi$ rooted at $s$
which visits at least $(1-\delta)k_{\opt}$ vertices
under the constraint that $\norm{\pi}\leq \B$.

\paragraph{Algorithms.}
Arora~\cite{Arora98} gave a PTAS for Euclidean TSP which runs in time
$n(\log n)^{(d/\delta)^{O(d)}}$.
He also showed how to modify the algorithm to solve $k$-TSP in time
$k^2 n(\log k)^{(d/\delta)^{O(d)}}$.
This is done by modifying the dynamic program, so that for every
candidate cell the program computes an optimal tour visiting at least
$k'$ points, for each $k' \in [k]$.
(We also note the dependence on $\log k$ instead of $\log n$.)
Another simple modification to Arora's algorithm is to compute
$m$ tours which together visit all points, and this increases the runtime to
$n(2^m\log n)^{(d/\delta)^{O(d)}}$.\footnote{For those familiar with
Arora's construction, we must add to each active portal a list of tours incident
upon it. The factor $2^m$ represents the ensuing increase in the number
of possible configurations.}
Combining these two separate extensions gives a solution to
$(m,k)$-TSP in time
$k^2 n(2^m\log n)^{(d/\delta)^{O(d)}}$.

\section{A $\delta$-excess-approximation algorithm for rooted $(m,k)$-TSP}\label{sec:mk-tsp}

In this section, we present the $\delta$-excess-approximation algorithm for rooted $(m,k)$-TSP.
Later in Section \ref{sec:orient}, we use this algorithm as a subroutine to approximate the orienteering problem.
For purposes of exposition, we will first show how to solve the case $m=1$,
i.e., rooted $k$-TSP, using a plane sweep algorithm (PSA),
and then extend this plane sweep algorithm to general $m$,
i.e., solve $(m,k)$-TSP.

Our algorithm combines ideas from Blum et al.~\cite{BCKLMM07} and Chen and Har-Peled~\cite{CH08}.
In Section~\ref{sec:sub-ktsp} we solve $k$-TSP using the techniques of Blum et al.~\cite{BCKLMM07}, but with a critical modification that uses the Euclidean (rather than metric) setting and allows extension of our algorithm to the more general $(m,k)$-TSP in Section~\ref{sec:sub-mktsp}.

\subsection{Algorithm for rooted $k$-TSP}\label{sec:sub-ktsp}

We present $\delta$-excess-approximation algorithm for rooted $k$-TSP, as follows.

\begin{theorem}\label{THM: PTAS for k-TSP}
  Given as input the endpoints $s,t \in \R^d$, a set of $n$ points $P\subset \R^{d}$, an integer $2\leq k\leq n$, and an accuracy parameter $\delta\in(0,1)$,
  there is an algorithm that runs in time
$n^{O(1)} 2^{(d/\delta)^{O(d)}}$
  and finds a $k$-TSP path from $s$ to $t$ of length at most $\OPT + \delta\cdot \E$,
  where $\OPT$ is the minimum length of a $k$-TSP path from $s$ to $t$,
  and its excess is denoted by $\E=\OPT -\norm{t-s}$.
\end{theorem}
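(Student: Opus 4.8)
The plan is to follow Mitchell's guillotine/window divide-and-conquer, but replace his "window = intersection of $2d$ halfspaces" with a much cheaper notion of window defined by just two points of $P$, so that only $O(n^2)$ windows arise. First I would set up a structural decomposition of an optimal $k$-TSP path $\pi^*$ from $s$ to $t$: I sweep a hyperplane (say orthogonal to a fixed axis, after a standard perturbation/rotation argument or after enumerating $O(1/\delta)$ grid-aligned sweep directions — whichever the authors use) and cut $\pi^*$ at the points where it crosses the sweep hyperplane. This produces a recursive partition in which each piece of space is a "slab/window" bounded on the sweep axis by two coordinates; the key observation is that these two bounding coordinates can be snapped to coordinates of two input points of $P$ at a cost that is absorbed into the excess $\E$, because the total amount of "movement" charged is proportional to how far the path already travels beyond the straight segment $\|t-s\|$. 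So a window is specified by an ordered pair in $P\times P$ (plus $O(1/\delta)$ bits of bookkeeping), giving the claimed $n^{O(1)}$ factor instead of $n^{\poly(d)}$.

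Next I would run a dynamic program over these windows. The DP state is: a window $W$ (i.e., a pair of points), the set of "portal" crossing points on its boundary through which subpaths of $\pi^*$ enter/leave — here I would invoke Arora's portal machinery, placing $(\log/\delta)^{O(d)}$-many portals on each boundary facet and rounding each crossing to a nearby portal, incurring total error $\le \delta\E$ by the usual packing/charging argument — a pairing of these portals into subpaths, and a target count $k'\in[k]$ of points to be visited inside $W$. For each state the DP computes the minimum total length of a collection of subpaths inside $W$ realizing that portal pairing and visiting $\ge k'$ points of $P\cap W$; it recurses by choosing a sweep cut of $W$ into two sub-windows, enumerating how the portals and the count $k'$ split across the cut, and adding the cost of connecting portals across the cut. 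The base case is a window containing $O(1)$ points (or points that are $\delta$-close so that visiting all of them is free up to $\delta\E$), solved by brute force. Correctness is the standard Arora/Mitchell argument: the optimal path, after snapping window boundaries to input points and snapping crossings to portals, becomes a feasible DP solution of length $\le \OPT + \delta\E$, and conversely any DP solution assembles into a valid $k$-TSP path from $s$ to $t$.

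For the running time: there are $n^{O(1)}$ windows; each carries $(d/\delta)^{O(d)}$ portals on $d$ facets, hence $2^{(d/\delta)^{O(d)}} = (\log k)^{(d/\delta)^{O(d)}}$-ish configurations of active portals and pairings (the exact exponent is where the $(\log k)^{(d/\delta)^{O(d)}}$ term comes from, matching the Arora-style bound in the Preliminaries); the factor $k$ for the count $k'$; and the DP transition enumerates cut positions ($n^{O(1)}$) and splits of portals/counts, all of which multiply to $n^{O(1)}(\log k)^{(d/\delta)^{O(d)}}$. (I would, as the excerpt's Preliminaries paragraph hints, keep the $\log k$ rather than $\log n$ by first sparsifying or by the usual observation that only $O(k)$ "scales" matter once points are bucketed.)

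The main obstacle, and the part I expect to spend real work on, is the charging argument that justifies defining windows by two input points: I must show that shifting each window boundary hyperplane to pass through (the coordinate of) a nearby point of $P$ — and correspondingly rerouting the subpaths that crossed the old boundary — increases the total length by at most $O(\delta)\cdot\E$ over the whole recursion, not per level. This requires carefully amortizing the boundary-movement cost against the excess "budget" $\E = \OPT - \|t-s\|$, using that a path with small excess stays close to the segment $st$ and that the recursion has $O(\log n)$ depth (or is truncated appropriately), so that a per-level loss of $O(\delta/\log n)\cdot\E$ telescopes. Interleaving this with the simultaneous portal-rounding error, and making sure the two error sources are both charged to $\E$ (which can be far smaller than $\OPT$), is the delicate core of the proof; everything else is a routine adaptation of the Arora–Mitchell DP.
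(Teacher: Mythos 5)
You have correctly identified the key idea (windows defined by pairs of input points, replacing Mitchell's windows and cutting the dependence from $n^{\poly(d)}$ to $n^{O(1)}$), but the detailed structure you propose diverges from the paper and has a genuine gap in the charging argument. The paper's dynamic program is not a recursive guillotine decomposition with portals on window boundaries; it is a single left-to-right plane sweep. The state $V(p_i,d,k')$ records the length of the best path from $s$ to $d$ visiting $k'$ points with $x$-coordinate at most $p_i[0]$, and each transition appends exactly one forward edge $(d',c)$ plus one subtour confined to a window $\w_{p_{j+1},p_i}$; inside each window the paper simply calls Arora's $k$-TSP PTAS as a black box with prescribed endpoints $c,d$, rather than building a custom portal/pairing DP. More importantly, your central worry --- that window boundary hyperplanes must be ``snapped'' to input points and this cost amortized over $O(\log n)$ recursion levels --- does not arise. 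The paper decomposes $\pi^*$ into maximal backward-facing subpaths and $x$-monotone subpaths: each maximal backward subpath $\pi^*(b_i,a_i)$ already begins and ends at input points, so $\w_{a_i,b_i}$ is defined by input points for free; after merging overlaps, the windows are pairwise disjoint; and the $x$-monotone subpaths between windows are handled \emph{exactly} by the sweep, with no portals and no rounding error whatsoever.

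The structural lemma driving the analysis, which your proposal is missing, is this: every backward-facing edge contributes its whole length to the excess, so $\norm{E_b}\le\E$, and for each window $\w$ with entry/exit points $c^*,d^*$ one has $\norm{\pi^*(c^*,d^*)}\le 2\max\bigl\{\norm{E_b\cap\pi^*(c^*,d^*)},\ \EW(\w)\bigr\}$; summing over disjoint windows, the total of Arora's multiplicative error $\delta'\norm{\pi^*(c^*,d^*)}$ is $O(\delta')\cdot\E$ with no per-level loss to telescope. Without this concentration of error inside windows, the portal-rounding scheme you sketch would charge its error against the whole tour, giving $\OPT+\delta\cdot\OPT$ rather than $\OPT+\delta\cdot\E$ --- and the excess-based guarantee is exactly what the orienteering reduction requires, since $\E$ can be far smaller than $\OPT$. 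The $O(\log n)$-depth amortization is also unjustified as stated: nothing in a two-point window decomposition enforces logarithmic recursion depth, and Arora's error is incurred once per window, not once per level.
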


The rest of this section is devoted to the proof of Theorem \ref{THM: PTAS for k-TSP}.
Before introducing the construction and proof, let us present the intuition behind it.
First let us rotate the space so that $s,t$ both lie on the $x$-axis,
with the $x$-coordinate of $s$ smaller than the $x$-coordinate of $t$.
Now suppose that the optimal path $\pi^*$ is monotonically increasing in $x$.
In this case, the optimal tour could be computed in quadratic time by a simple PSA,
a dynamic programming algorithm defined by a plane orthogonal to the $x$-axis and sweeping
across it from $x=-\infty$ to $x=\infty$.
For every encountered point $p \in P$, the algorithm must determine the optimal tour from
$s$ to $p$ visiting $k'$ points for all $k' \in [k]$, and since all edges are $x$-monotone,
all these $k'$ points must have been encountered previously by the sweep.
It follows that in this case the optimal $k'$-TSP tour ending at $p$ can be computed
by taking an optimal $(k'-1)$-TSP tour ending at each previously encountered point
$p'$, extending it to $p$ by one edge of length $\norm{p-p'}$, and then choosing among these tours (all choices of $p'$) the one of minimum cost.
This PSA ultimately computes the best $k$-TSP ending at each possible point, and the minimum
among these is the optimal tour.

The difficulty with the above approach is that the optimal tour might be non-monotone in the $x$-coordinate. It may contain {\em backward} edges,
while the PSA algorithm described above can only handle {\em forward} edges.
However, for an edge facing backwards, its entire length accounts for excess
in the tour. Hence, in the space (or more precisely, window) containing the
backward edge, we can afford to run Arora's $k$-TSP algorithm,
and pay $(1+\delta)$ times the entire tour cost in the window.
This motivates an algorithm which combines a sweep with Arora's $k$-TSP algorithm.
We proceed with the actual proof,
denoting the $x$-coordinate of a point $p$ by $p[0]$.

This algorithm is similar to that of Blum et al.~\cite{BCKLMM07}, which defines every window using only 2 points, and distinguishes between windows with only forward edges and windows that contain backward edges (called type-1 and type-2 in~\cite{BCKLMM07}). They use the MCP routine of~\cite{CGRT03} to approximate the minimum length of a path inside windows with backward edges, and stitch all the subpaths together by dynamic programming. There are two main differences in our algorithm. First, we run Arora's algorithm on the windows with backward edges. Second, our dynamic programming goes over the points in a different order. They order the points in increasing order of distance from $s$ (the starting point), while we order the points by their $x$-coordinate.

\begin{proof}[Proof of Theorem~\ref{THM: PTAS for k-TSP}]
We rotate the space so that $s,t$ lie on the $x$-axis, and then order all points based on increasing $x$-coordinate.
For two distinct points $p,q\in P$,
we say that \emph{$p$ is before $q$}, denoted $p<q$,
if $p$'s $x$-coordinate $p[0]$ is smaller than $q$'s $x$-coordinate $q[0]$;
otherwise, we say that \emph{$q$ is after $p$}, denoted $q>p$.
We can make an infinitesimally small perturbation on the points
to ensure that $p[0]\neq q[0]$ for all distinct points $p,q\in P$.

We define a {\em window} in $\R^d$ to be the space between (and including) two $(d-1)$-dimensional hyperplanes orthogonal to the $x$-axis.
For every point pair $a,b\in P$ with $a \le b$ (which means we allow $a=b$),
let $\w_{a,b}$ be a window of width $b[0]-a[0]$
containing $a,b$ on its respective ends. Note that a window is bounded in the $x$ direction and
unbounded in all other directions, and also that a window may have width $0$  (if $a=b$) and then it can contain at most one point of $P$.
We denote the points of $P$ contained in a window $\w_{a,b}$
by $P(\w_{a,b}):=\{p\in P \mid a\leq p\leq b\}$.
Let $\W:=\{\w_{a,b} \mid a,b\in P, a\leq b\}$,
and so $|\W|\leq O(n^2)$.

\paragraph{Algorithm.}
For some $\delta' = \Theta(\delta)$ to be specified below,
let $\A(a,b,c,d,k)$ be the output of Arora's
$(1-\delta')$-approximate $k$-TSP algorithm on the set $P(\w_{a,b})$ and
tour endpoints $c,d$ with $a \le c,d \le b$;
recall this algorithm returns the length of a near-optimal tour.
For every $a,b,c,d\in P$ and $k'\in [k]$, we precompute $\A(a,b,c,d,k')$,
see Figure~\ref{FGR: abcd window}.
We then order the points in $P$ by their $x$-coordinate as $p_1,\ldots,p_n$, and let $P_i:=\{p\in P \mid p\leq p_i\}$.
The algorithm sweeps over $p_1,\ldots,p_n$ (i.e., by their $x$-coordinate),
and upon encountering point $p_i$, it calculates for every $k'\in [k]$
a path from $s$ to $p_i$ visiting $k'$ points of $P_i$.
However, the algorithm first precomputes approximate subpaths
on many windows using Arora's algorithm,
and thus the sweep is actually stitching these subpaths together into a global solution.

\begin{figure}
  \centering
  \includegraphics[angle=0,width=0.6\textwidth]{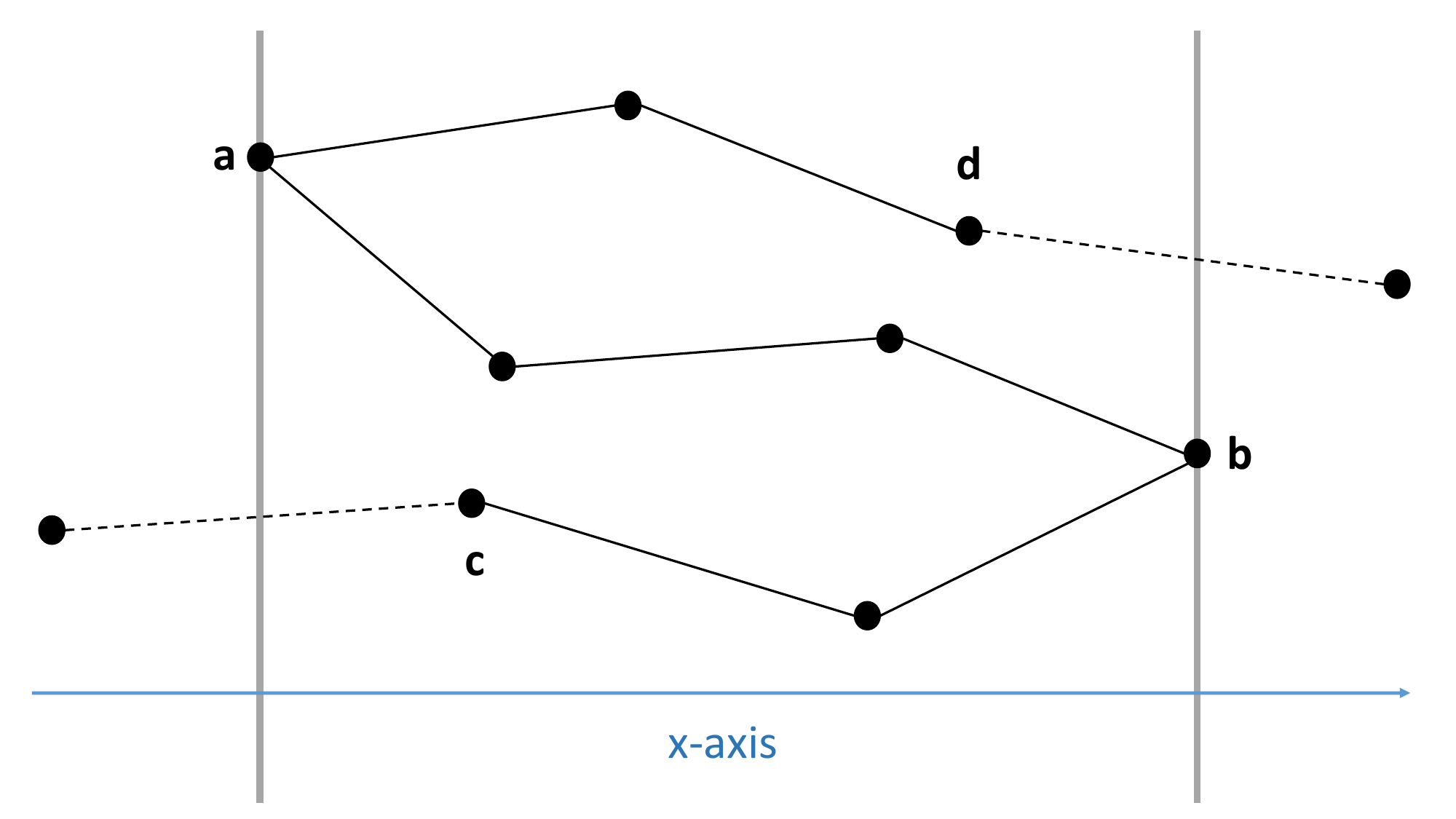}
  \caption{The window $\w_{a,b}$ is the space between the two thick gray lines,
    The solid black lines represent the subpath $\A(a,b,c,d,8)$,
    and the two dashed lines represent extending it outside that window.
  }
  \label{FGR: abcd window}
  \mbox{}\hrule
\end{figure}

Let $V$ be a $3$-dimensional dynamic programming table
with each entry $V(p_i, d,k')$ for $p_i\in P$, $d\in P_i$ and $k' \in [k]$,
containing the length of an already computed path from $s$ to $d$ that visits $k'$ points in $P_i$.
To initialize the table,
for all $d,p_i$ satisfying $d \le p_i= s$ and $k' \in [k]$,
we fix entries
$V(p_i, d,k')=\A(p_1,p_i,s,d,k')$.
All other entries are set to $\infty$.
(Note that this forces all paths to begin at $s$, even if portions of
those paths travel to the left of $s$.)
The algorithm then considers each $p_i > s$ in increasing order,
and calculates the entries for all $d$ satisfying $s < d \le p_i$ and all $k' \in [k]$,
by choosing the shortest path among several possibilities, as follows.

\begin{align*}
  V(p_i, d,k')
  = \min\Big\{
     V(&p_j, d', k'') + \norm{d'-c} + \A(p_{j+1},p_i,c,d,k'-k'') \mid \\
     & p_j\in P_i,\ d'\in P_j,\ c \in P_i\setminus P_j,\ k''<k'
    \Big\}
\end{align*}

The path associated with this entry
combines a previously computed path from $s$ to $d'$ visiting $k''<k'$ points in $P_j$
with an Arora subpath connecting endpoints $c,d$ inside window $\w_{p_{j+1},p_i}$
and visiting $k'-k''$ points in that window.
Connecting these two paths using an edge $(d',c)$ produces
a path from $s$ to $d$ that visits $k'$ points in $P_i$.
After populating the table, the algorithm reports the entry $V(p_n,t,k)$.

The above algorithm computes the {\em length} of a path,
but as usual it extends easily to return also the path itself.
It remains to prove that the returned path has length at most
$\OPT + \delta\cdot\E$.

\paragraph{Correctness.}
We will show that there exists a solution of length at most $\OPT + \delta\cdot\E$ that is considered by the dynamic program.
Let $\pi^*$ be an optimal path from $s$ to $t$ visiting $k$ points of $P$, i.e., $|\pi^*|=\OPT$.
Given a path $\pi$ and two points $p,q\in P(\pi)$,
denote by $\pi(p,q)$ the subpath of $\pi$ from $p$ to $q$.

The solution produced by our PSA represents a set of windows connected by edges between them.
However, some of these windows may be trivial and contain only a single point (and no edges),
so in fact the PSA produces a solution which is a set of non-trivial windows connected by
$x$-monotone subpaths (i.e., subpaths with only forward edges).
As such, our analysis will similarly split $\pi^*$ into windows and subpaths,
where the windows contain all the backward edges of $\pi^*$,
and the remaining edges constitute $x$-monotone subpaths.
Assume that there are $\ell$ maximal backward subpaths in $\pi^*$
(meaning that all edges of these subpaths face backwards),
denoted $\pi^*(b_i,a_i)$ for $i\in[\ell]$, where $a_i<b_i$.
Clearly, $\pi^*(b_i,a_i)$ is fully contained in window $\w_{a_i,b_i}$.
Since these windows may overlap (have non-empty intersection),
we repeatedly merge overlapping windows,
i.e., replace any two overlapping windows $\w_{a_i,b_i}, \w_{a_j,b_j}$
by the united window $\w_{\min \{a_i,a_j \}, \max \{b_i,b_j \} }$,
until no overlapping windows remain.
We thus assume henceforth that the $l$ windows $\w_{a_i,b_i}$
are pairwise disjoint and denote $\W^*:=\{\w_{a_i,b_i} \mid i\in[\ell]\}$.
See Figure~\ref{FGR: st segment} for illustration.

\begin{figure}
  \centering
  \includegraphics[angle=0,width=0.8\textwidth]{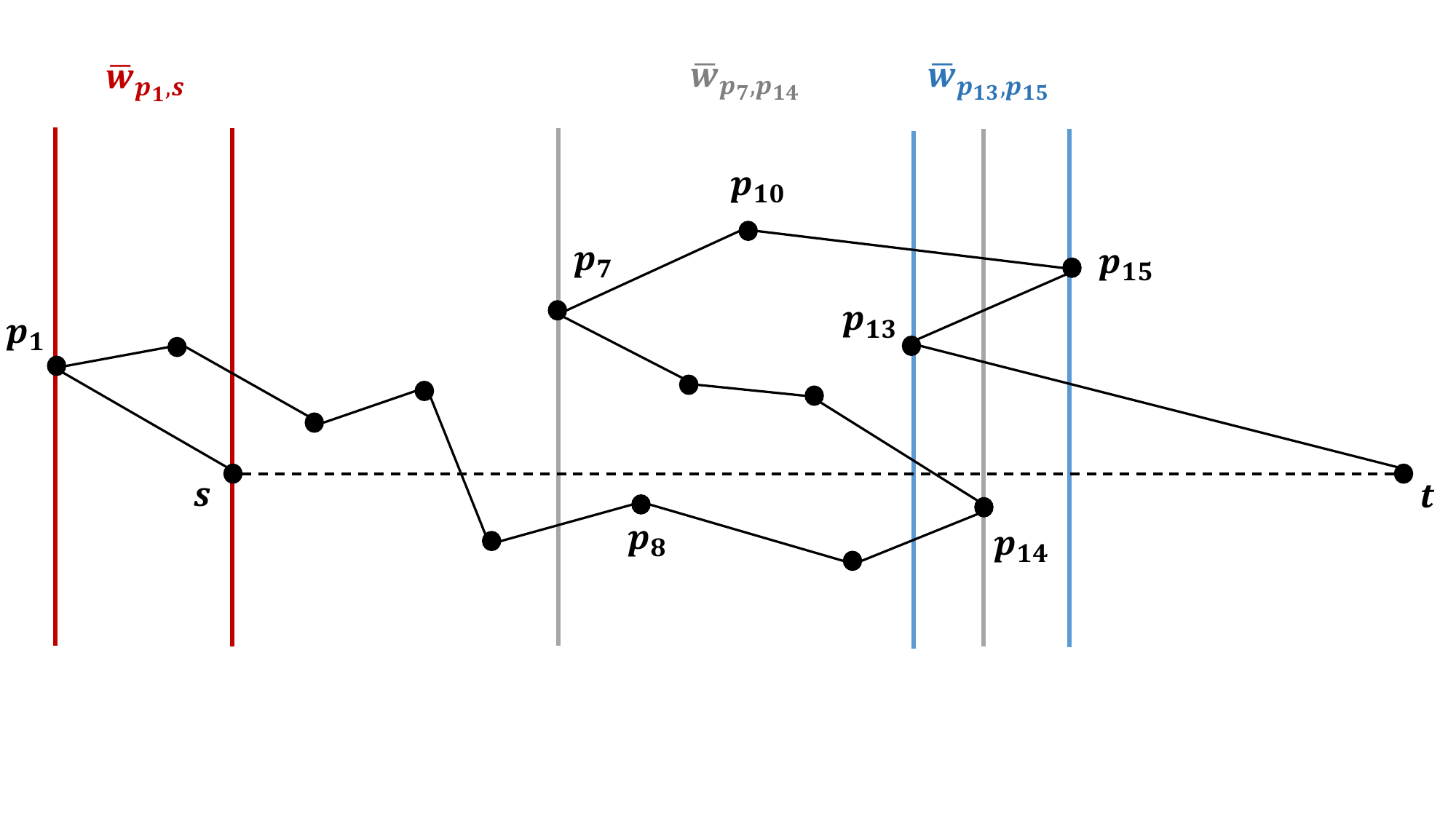}
  \caption{The solid black lines represent an optimal $k$-TSP path $\pi^*$ between $s$ and $t$ for $k=16$.
    It has 3 backward-facing subpaths
    $\pi^*(s,p_1)$, $\pi^*(p_{14},p_7)$ and $\pi^*(p_{13},p_{15})$.
    After merging overlapping windows $\w_{p_7,p_{14}}$ and $\w_{p_{15},p_{13}}$,
    we have $W^*=\{\w_{p_1,s},\w_{p_7,p_{15}}\}$.
    The window $\w_{p_7,p_{15}}$ contains the subpath $\pi^*(p_8,p_{13})$,
    and thus $\EW(\w_{p_7,p_{15}},p_8,p_{13}) = \norm{\pi^*(p_8,p_{13})}-|p_{13}[0]-p_8[0]|$.
  }
  \label{FGR: st segment}
  \mbox{}\hrule
\end{figure}

Having merged overlapping windows, we have an ordered set of windows
where every two successive windows are connected by an $x$-monotone
subpath of $\pi^*$.
Now for a window $\w_{a,b}\in \W^*$, let $c^*(\w_{a,b})$ and $d^*(\w_{a,b})$
be the entry and exit points of the optimal path $\pi^*$ inside $\w_{a,b}$;
notice these points are necessarily unique.

Recall that the excess of $\pi^*$ is defined as $\E(\pi^*)=\norm{\pi^*}-\norm{t-s}$.
Let $E_f$ be all the edges in $\pi^*$ that face forwards, and denote by $\norm{E_f}$ their total length.
Similarly, let $E_b$ be all edges in $\pi^*$ that face backwards, and denote by $\norm{E_b}$ their total length.
Clearly $\norm{E_f} \geq \norm{t-s}$, hence every edge that faces backwards contributes its entire length to the
excess $\E(\pi^*)$, i.e.,
\begin{equation} \label{eq:Eb}
  \E(\pi^*)
  = \norm{\pi^*}-\norm{t-s}=\norm{E_f}+\norm{E_b}-\norm{t-s}
  \ge \norm{E_b}.
\end{equation}
We now define the excess of a window $\w_{a,b}\in \W^*$
with endpoints $c^*=c^*(\w_{a,b})$ and $d^*=d^*(\w_{a,b})$
to be
\[
  \EW(\w_{a,b}) := \norm{\pi^*(c^*,d^*)}-|d^*[0] - c^*[0]|,
\]
which is non-negative because
$\norm{\pi^*(c^*,d^*)}
  \ge | b[0]-a[0] |
  \ge | d^*[0] - c^*[0] |$.
Because the windows in $\W^*$ are pairwise disjoint, it is immediate that
\begin{equation} \label{eq:EW}
  \sum_{\w_{a,b}\in \W^*}\EW(\w_{a,b})
  \leq \E(\pi^*).
\end{equation}
See Figure~\ref{FGR: st segment} for illustration.

Applying Arora's $k$-TSP algorithm on window $\w_{a,b} \in \W^*$
with endpoints $c^*=c^*(\w_{a,b})$ and $d^*=d^*(\w_{a,b})$
returns a path of length at most $(1+\delta')\norm{\pi^*(c^*,d^*)}$,
and we would like to bound $\delta' \norm{\pi^*(c^*,d^*)}$
relative to the excess $\E(\pi^*)$.
To this end, we first bound it relative to
backward edges and excess in that subpath/window,
by claiming that
\begin{equation} \label{eq:ExcessPlusBackwards}
  \norm{\pi^*(c^*,d^*)}
  \le 2 \max\big\{ \norm{E_b\cap \pi^*(c^*,d^*)} , \EW(\w_{a,b}) \big\},
\end{equation}
where $E_b\cap \pi^*(c,d)$ denotes the backward-facing edges in $\pi^*(c,d)$.
Indeed, the claim holds trivially if
$\norm{\pi^*(c^*,d^*)} \le 2\norm{E_b\cap \pi^*(c^*,d^*)}$,
and otherwise we have
$\norm{\pi^*(c^*,d^*)}
  > 2\norm{E_b\cap \pi^*(c^*,d^*)}
  \ge 2 | d^*[0] - c^*[0] |$,
and thus
$\EW(\w_{a,b})
  = \norm{\pi^*(c^*,d^*)}-|d^*[0]-c^*[0]|
  \ge \frac12 \norm{\pi^*(c^*,d^*)}$,
as claimed.

It follows that applying Arora's algorithm with $\delta'=\frac{1}{4}\delta$
on each of the non-overlapping windows in $\W^*$
will approximate the optimum $\norm{\pi^*}$ within total additive error
\begin{align*}
  \sum_{\w_{a,b}\in \W^*} & \delta' \cdot \norm{\pi^*(c^*(\w_{a,b}),d^*(\w_{a,b}))}
  \\
  & \le 2\delta' \sum_{\w_{a,b}\in \W^*}
    \big[\EW(\w_{a,b}) + \norm{E_b\cap \pi^*(c^*(\w_{a,b}),d^*(\w_{a,b}))} \big]
  & \text{by~\eqref{eq:ExcessPlusBackwards}}
  \\
  &\le 2\delta' \big[ \E(\pi^*) + \norm{E_b} \big]
    \le 4\delta' \cdot \E(\pi^*)
    = \delta \cdot \E
  & \text{by~\eqref{eq:EW} and~\eqref{eq:Eb}}.
\end{align*}

Our PSA is a dynamic program that optimizes over many combinations of windows,
including the above collection $\W^*$,
and thus the path that it returns, which can be only shorter,
must be a $\delta$-excess-approximation to the optimal $k$-TSP solution $\pi^*$.

\paragraph{Running time.}
The table $V$ has $O(n^3)$ entries, and computing each entry requires
consulting $O(n^3)$ other entries.
In addition, one has to invoke Arora's $k$-TSP algorithm $O(n^3)$ times,
each executed in time $k^2 n(\log k)^{(d/\delta)^{O(d)}}$.
Thus, the total running time is indeed
$n^{O(1)}(\log k)^{(d/\delta)^{O(d)}} =
n^{O(1)} 2^{(d/\delta)^{O(d)}}$.
This completes the proof of Theorem~\ref{THM: PTAS for k-TSP}.
\end{proof}

\subsection{Algorithm for rooted $(m,k)$-TSP}\label{sec:sub-mktsp}

Having shown how to construct a PSA for $k$-TSP, we extend this result to
the more general $(m,k)$-TSP.
We can prove the following theorem, which is the extension of
Theorem \ref{THM: PTAS for k-TSP} to multiple tours:

\begin{theorem}\label{THM: PTAS for (m,k)-TSP}
There is an algorithm that,
given as input $m$ source-sink pairs $s_i,t_i\in\R^d$ for $i \in [m]$,
a set of $n$ points $P \subset \R^d$,
an integer $1 \le k \le n$,
and an accuracy parameter $\delta\in(0,1)$,
runs in time
$n^{O(m)} 2^{(md/\delta)^{O(d)}}$
and reports $m$ paths, one from each $s_i$ to its corresponding $t_i$,
that together visit $k$ points of $P$
and have total length at most
$\norm{\Pi^*} + \delta \cdot \E(\Pi^*)$,
where $\Pi^*$ is the minimum total length of $m$ such paths.
\end{theorem}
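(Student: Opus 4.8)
The plan is to generalize the plane-sweep algorithm (PSA) of Theorem~\ref{THM: PTAS for k-TSP} from a single path to $m$ simultaneous paths, maintaining the same two-part structure: a sweep that stitches together $x$-monotone subpaths, together with calls to Arora's multi-tour $k$-TSP routine inside windows that capture all the backward edges. First I would fix a single $x$-axis (one global rotation will not put all $m$ segments $s_i t_i$ on the $x$-axis, so I instead keep an arbitrary axis and simply classify each edge of each $\pi_i^*$ as forward or backward with respect to that axis; this is fine because the excess bound~\eqref{eq:Eb} must now be re-derived per path and summed --- see below). The dynamic program will sweep over the $n$ points $p_1,\dots,p_n$ by $x$-coordinate, and its table entry must now record, for the current sweep position and for \emph{each} of the $m$ paths, which point that path currently sits at, how many points it has accumulated, and how the budget $k$ is split; concretely an entry is indexed by $p_i$, by an $m$-tuple $(d_1,\dots,d_m)\in P_i^m$ of current endpoints, and by an $m$-tuple $(k_1,\dots,k_m)$ with $\sum k_j = k'$. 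The transition extends \emph{one} path at a time (or, equivalently, processes the arrival of the new slab at $p_i$) by attaching an Arora multi-tour subpath computed inside a window $\w_{p_{j+1},p_i}$: here we invoke Arora's algorithm in the variant noted in the Preliminaries that computes several tours visiting a prescribed total number of points, with the prescribed incident endpoints being the relevant $c$'s and $d$'s of the paths that pass through that window.

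The key steps, in order, are: (1) define windows exactly as before, $\W = \{\w_{a,b} : a,b\in P,\ a\le b\}$, and precompute $\A(a,b,C,D,k')$ where now $C,D$ are ordered tuples of up to $m$ entry/exit points in the window and $k'\le k$; the number of such precomputed values is $n^{O(m)}\cdot k \cdot (\text{poly})$ and each call costs $k^2 n (2^m\log n)^{(d/\delta)^{O(d)}}$ by the combined extension of Arora's algorithm stated in the Preliminaries. (2) Define the DP table $V\big(p_i, (d_1,\dots,d_m), (k_1,\dots,k_m)\big)$ holding the minimum total length of $m$ partial paths, the $j$-th running from $s_j$ to $d_j$ and visiting $k_j$ points of $P_i$, initialized so that each path starts at its own $s_j$ (mirroring the ``forces all paths to begin at $s$'' initialization, applied coordinatewise). (3) Write the transition as a minimum over: a choice of earlier sweep position $p_{j'}$, earlier endpoint tuple $(d_1',\dots,d_m')$ and split $(k_1'',\dots,k_m'')$, a tuple of new entry points $c_j$ in the window $\w_{p_{j'+1},p_i}$, connecting edges $\norm{d_j' - c_j}$ for the paths that enter the window, and the Arora value $\A$ for the multi-tour subpath inside that window on the remaining points --- the only subtlety being to let $\A$ handle a variable subset of the $m$ paths (those that actually traverse this window), which is a polynomial-in-$n$, factor-$2^m$ enumeration. (4) Report $V(p_n, (t_1,\dots,t_m), (k_1,\dots,k_m))$ minimized over all splits summing to $k$.

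For correctness I would mirror the single-path argument: take an optimal $\Pi^* = \{\pi_1^*,\dots,\pi_m^*\}$, and for each $\pi_i^*$ collect its maximal backward subpaths into windows, then merge \emph{all} overlapping windows across all $m$ paths into a pairwise-disjoint family $\W^*$ (merging across paths is essential since different optimal paths may backtrack through the same slab; a merged window is handled by a single Arora multi-tour call covering all paths that pass through it). The per-path excess inequality analogous to~\eqref{eq:Eb} now reads $\E(\pi_i^*) \ge \norm{E_b\cap\pi_i^*}$ --- valid because for path $i$ the forward edges project onto at least $\norm{t_i - s_i}$ of $x$-extent --- and summing gives $\E(\Pi^*)\ge\norm{E_b}$ where $E_b$ is the union of all backward edges. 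Inequality~\eqref{eq:EW} generalizes verbatim since the merged windows are disjoint and each window's excess, summed over the path-segments inside it, is at most the total excess restricted to that window. The window-cost bound~\eqref{eq:ExcessPlusBackwards} is applied per path-segment inside each window and summed, so that setting $\delta' = \Theta(\delta)$ (e.g.\ $\delta'=\delta/4$, possibly with an extra constant absorbed for the per-path bookkeeping) makes the total Arora overhead at most $\delta\cdot\E(\Pi^*)$; since the DP optimizes over all window families including $\W^*$, the reported solution is a $\delta$-excess-approximation.

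The main obstacle I anticipate is the window/DP interface in the multi-path setting: a single window may be traversed by several of the $m$ paths at once, each entering and exiting at its own points, so the ``Arora subpath in a window'' must be a multi-tour with a \emph{prescribed, variable-size} set of endpoint pairs, and the DP transition must correctly bookkeep which paths are ``inside'' versus ``outside'' the current window without double-counting points or edges. Getting the combinatorics of this enumeration right --- and verifying it only costs a factor $n^{O(m)}$ in table size, $n^{O(m)}$ per transition, and $2^{O(m)}$ inside each Arora call, for a total of $n^{O(m)}(\log n)^{(md/\delta)^{O(d)}}$ --- is the crux; the excess accounting itself is a routine lift of the $m=1$ argument once the windows are merged globally. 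A secondary point to be careful about is that here $k$ may equal $1$ (unlike Theorem~\ref{THM: PTAS for k-TSP}), and the degenerate case where some path visits no points beyond its endpoints must be allowed in the initialization and transitions.
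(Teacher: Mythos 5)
There is a genuine gap in your excess accounting, and it is precisely the place where the $(m,k)$-TSP case departs from the $m=1$ case. You assert that, with an arbitrary choice of $x$-axis, each path still satisfies $\E(\pi_i^*) \ge \norm{E_b \cap \pi_i^*}$ because ``the forward edges project onto at least $\norm{t_i - s_i}$ of $x$-extent.'' That claim is false. The total $x$-progress of the forward edges of $\pi_i^*$ is at least $t_i[0]-s_i[0]$ (the $x$-extent of the segment), not at least $\norm{t_i-s_i}$ (its Euclidean length); the two coincide only when $s_it_i$ is aligned with the $x$-axis, which a single global rotation cannot achieve for all $m$ segments simultaneously. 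Concretely, a path whose endpoints are nearly perpendicular to the $x$-axis can zigzag with large $x$-backtracking while having almost zero excess relative to $\norm{t_i-s_i}$, so $\norm{E_b\cap\pi_i^*}$ can be arbitrarily larger than $\E(\pi_i^*)$. Once~\eqref{eq:Eb} fails per path, the window-cost bound~\eqref{eq:ExcessPlusBackwards} and the total-overhead bound no longer follow, and the argument that $\delta'=\Theta(\delta)$ suffices collapses.

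The paper's proof handles exactly this obstruction with two ingredients you are missing. First, Lemma~\ref{lem:direction} shows that one \emph{can} rotate the space (and possibly swap some $(s_i,t_i)$) so that every segment $s_it_i$ has angle at most $\tfrac{\bpi}{2}-\tfrac{1}{8m^{3/2}}$ with the $x$-axis; this rotation step is part of the algorithm, not just the analysis, and without it the sweep direction is genuinely unusable. Second, Lemma~\ref{lem:projection} charges every edge whose direction deviates from its path's overall direction by more than $\gamma$ to a $\Theta(\gamma^2)$ fraction of that path's excess; this is what replaces the trivial observation that ``backward edges are all excess.'' Applying it with $\gamma\approx 1/m'$ yields $\norm{E_b}\le O(m^3)\E(\Pi^*)$ (and a similar bound for forward edges that are misaligned with their path), and the progression argument for well-aligned forward edges picks up a further $O(m')$ factor, forcing $\delta'=\Theta(\delta/m^{5.5})$ rather than $\Theta(\delta)$. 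This polynomial-in-$m$ loss is harmless for the stated running time because it is absorbed into the $(md/\delta)^{O(d)}$ exponent, but it is not a constant, and pretending otherwise hides the actual crux of the extension. Your DP table and window construction are essentially the same as the paper's (modulo the $S,T$ arrays with null entries), so the construction side is fine; it is the ``routine lift'' of the excess argument that does not go through as claimed.
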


As before, we first provide the construction, and then demonstrate correctness.
The construction closely parallels that of the PSA for $k$-TSP,
being a collection of $x$-monotone multi-paths connecting windows.

For points $s,t\in\R^d$, let $st$ be the directed line segment connecting them.
Let the \emph{angle} of $st$ be the angle of its direction vector
to the $x$-axis.
Given a path $\pi$ with endpoints $s,t$, we define the angle of $\pi$
to be the angle of the vector $st$.

Given a set $\Pi$ of $m$ paths with respective endpoints $s_i,t_i$ for $i \in [m]$
the space may be rotated and (if necessary) some values
$s_i,t_i$ swapped to ensure that in the resulting space
each directed path has angle in the range
$[0,\frac{\bpi}{2} - \frac{1}{m'}],$\footnote{We use $\bpi$ to denote the
mathematical constant, and $\pi$ to denote a path.}
where $m' = 8m^{3/2}$
(see Lemma \ref{lem:direction}).
We execute this rotation step before the run of the multi-path PSA.
The crux is that this direction is ``good enough'' for each of the $m$ paths,
meaning that it is effective as a sweep direction
for all the $m$ paths simultaneously.
In contrast, the ordering of Blum et al.~\cite{BCKLMM07}
(according to distances from a starting point $s$)
must use a single starting point and does not extend to multiple paths.

\ifprocs
\begin{figure}
    \begin{subfigure}[ t ]{0.5\textwidth }
        \centering
        \includegraphics[ width =1\textwidth ]{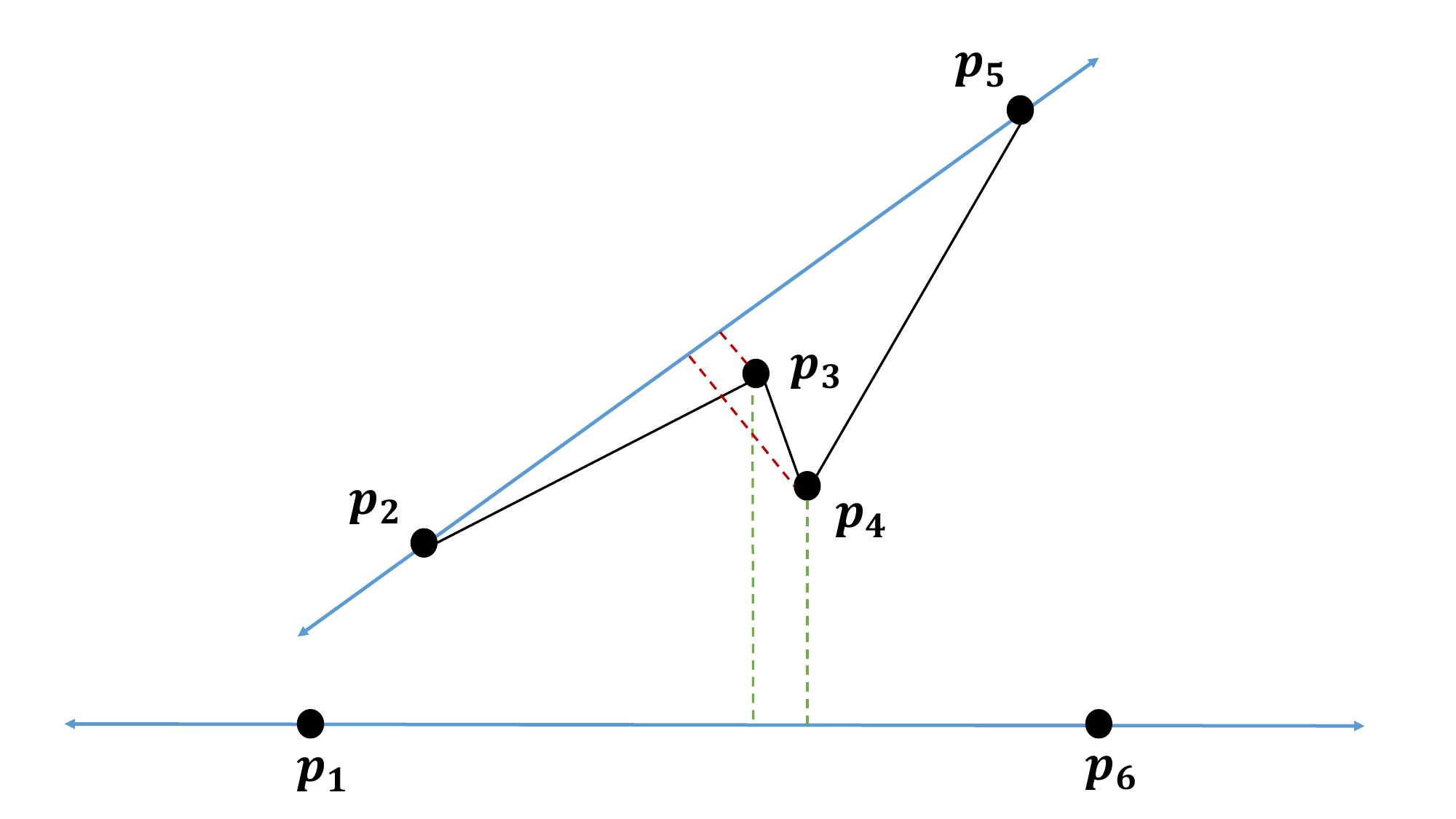}
        \caption{}\label{FGR: mkTSPsegmentA}
    \end{subfigure}\hfill
    \begin{subfigure}[ t ]{0.5\textwidth }
        \centering
        \includegraphics[ width =1\textwidth ]{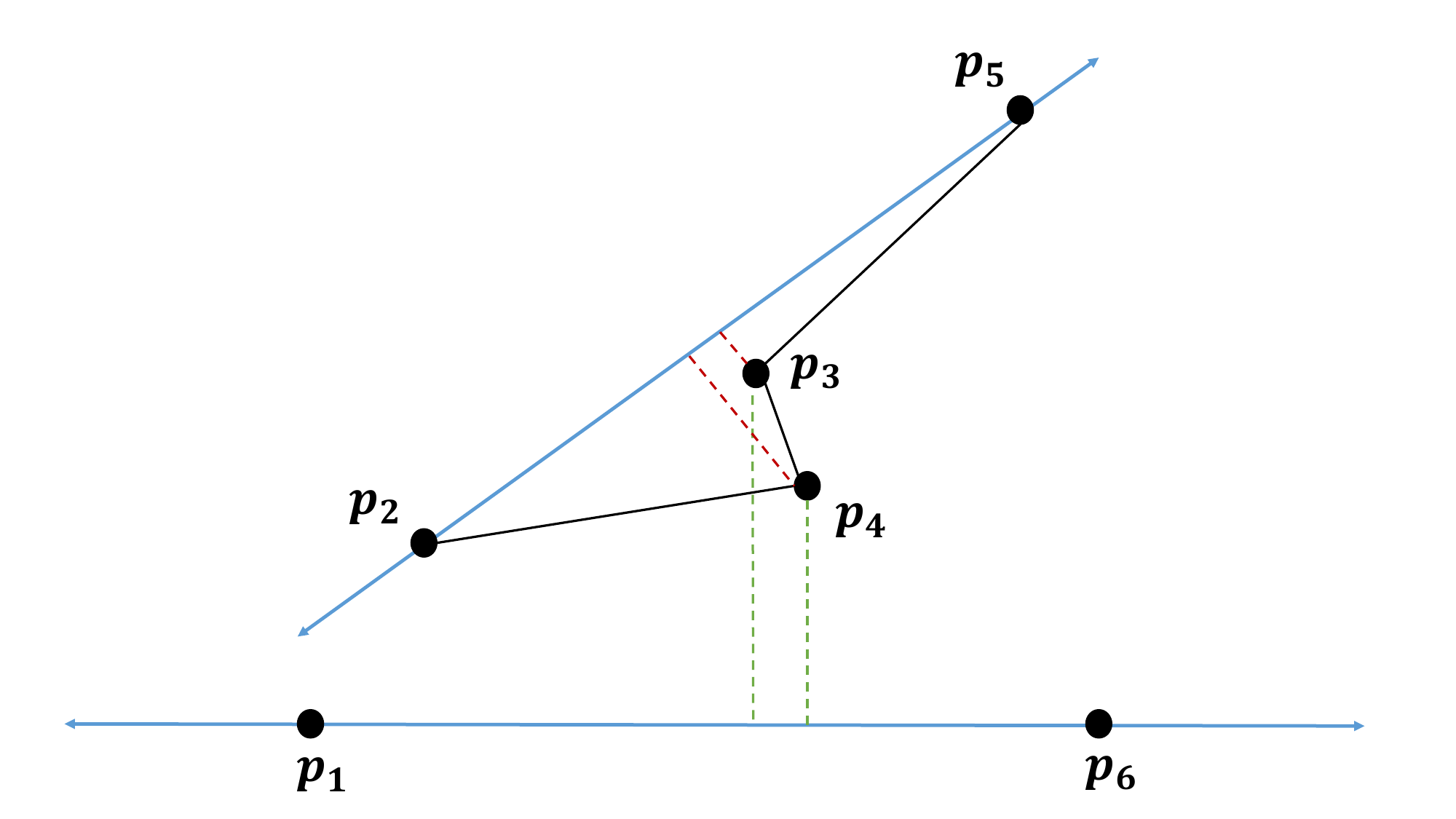}
        \caption{}\label{FGR: mkTSPsegmentB}
    \end{subfigure}
    \caption{ Both figures consist of the same points $p_1,\ldots,p_6 \in R^d$,
and the two directions $h_1$ and $h_2$ are determined by the points $p_1,p_6$ and $p_2,p_5$,
respectively.
Figures~\ref{FGR: mkTSPsegmentA} and~\ref{FGR: mkTSPsegmentB} differ only in the path that connects
points $p_2,p_3,p_4,p_5$. In Figure~\ref{FGR: mkTSPsegmentA} the edge $(p_3,p_4)$ is a backward-facing
edge with respect to $h_2$ but not $h_1$, while in Figure~\ref{FGR: mkTSPsegmentB} the edge $(p_4,p_3)$ is a
backward-facing edge with respect to $h_1$ but not $h_2$.}\label{FGR: mkTSPsegment}
\end{figure}

\else
\begin{figure}
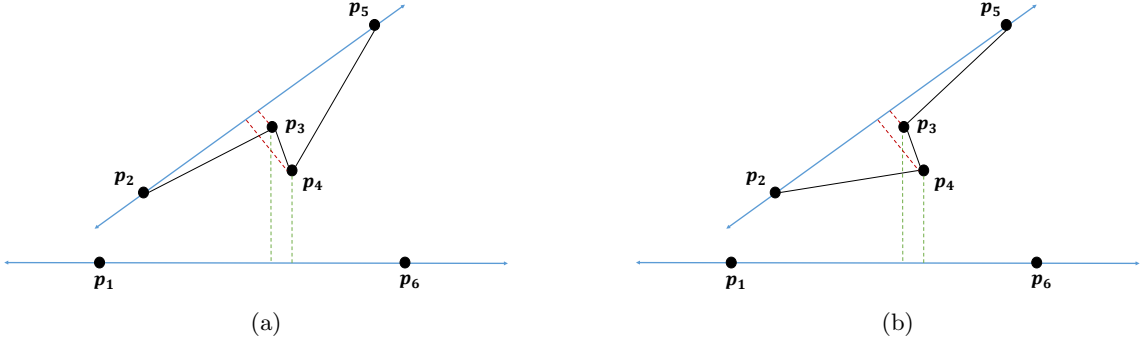

    \centering
    \subfloat[]{{\includegraphics[width=7.5cm]{mkTSPsegmentA}\label{FGR: mkTSPsegmentA} }}
    \qquad
    \subfloat[]{{\includegraphics[width=7.5cm]{mkTSPsegmentB}\label{FGR: mkTSPsegmentB} }}
    \caption{Both figures consist of the same points $p_1,\ldots,p_6 \in R^d$, and the two hyperplanes $h_1$ and $h_2$ are determined by the points $p_1,p_6$ and $p_2,p_5$ correspondingly. Figures~\ref{FGR: mkTSPsegmentA} and~\ref{FGR: mkTSPsegmentB} differ by the path that connects the points $p_2,p_3,p_4,p_5$. In Figure~\ref{FGR: mkTSPsegmentA} the edge $(p_3,p_4)$ is a backward-facing edge with respect to $h_2$, while in Figure~\ref{FGR: mkTSPsegmentB} the edge $(p_4,p_3)$ is a backward-facing edge with respect to $h_1$.}\label{FGR: mkTSPsegment}
\end{figure}
\fi

\paragraph{Construction.}
The construction handles $m$ paths simultaneously.
Let $S,T$ be arrays of length $m$,
with entries $S[j], T[j]$ corresponding to the source-sink pair
of the $j$-th path.
Let a window be defined as in Section \ref{sec:sub-ktsp}.
For some $\delta' = \Omega_m(\delta)$ to be specified below,
let $\A(a,b,S,T,k)$ be the output of Arora's
$(1+\delta')$-approximate
$(m,k)$-TSP algorithm on the set $P(\w_{a,b})$
and tour endpoint arrays $S,T$.
(We may assume for simplicity that
$a \le S[j],T[j] \le b$ for all $j$.
If both $S[j],T[j]$ are null, the algorithm
will ignore the $j$-th path. If exactly one is null, the algorithm
will return $\infty$.)
For every $a,b \in P$, $S,T \subset P^m$ and $k' \in [k]$,
we precompute $\A(a,b,S,T,k')$.
The algorithm then sweeps the $x$-axis from left to right as before
to calculate the solution to subproblems up to a point $p_i \in P$.

Similarly to what was done above,
let $V$ be a $4$-dimensional lookup table
with an entry $V(p_i,S,T,k')$ for every
$p_i \in P$, $S,T \in P_i^m$ and $k' \in [k]$,
that contains the length of computed paths from
sources $S$ to sinks $T$ that together visit $k'$
points in $P_i$.
For the initialization, we add a dummy point $p_0$ to $P$
and initialize the single entry
$V(p_0,S,T,0)$,
where arrays $S,T$ contain all null points, to be 0.
We initialize all other table entries to $\infty$.
Define the distance from a point to a null point
(or between two null points) to be 0.
The algorithm considers each $p_i \in P$ $(i \ge 0)$
in increasing order, and computes the entries for
all $S,T \in P_i^m$ by choosing the shortest path among
several possibilities, as follows:

\begin{align*}
  V(p_i, S, T, k')
  = \min\Big\{
     V(&p_j, S_1, T_1, k'') + \sum_{l=1}^{m} \norm{T_1[l]-S_2[l]} + \A(p_{j+1},p_i,S_2,T_2,k'-k'') \mid \\
     & p_j\in P_i,\ S_1,T_1 \in P_j^m,\ S_2,T_2 \in (P_i\setminus P_j)^m,\ k''<k'
    \Big\}
\end{align*}

As before, this computation combines the length of a previously
computed approximated shortest path to a new Arora multi-path.
However, we add to the above description feasiblity requirements,
which are sufficient to ensure the validity of the final tour.
First note that the invocation to Arora's algorithm on arrays $S,T$ ensures
that a non-infinite solution is possible only if
$S[j],T[j]$ are both null or both non-null for all $j$.
We further require for $S,S_1,S_2$ and all $j$ that
$S_1[j] = S[j]$, unless $S_1[j]$ is null, in which case
we require that $S_2[j] = S[j]$. This ensures that the computed subtour
has source $S[j]$.
Likewise, we require for $T,T_1,T_2$ and all $j$ that
$T_2[j] = T[j]$, unless $T_2[j]$ is null, in which case
we require that $T_1[j] = T[j]$. This ensures that the computed subtour
has sink $T[j]$.
Also, if $S[j]$ ($T[j]$) is null, then
$S_1[j],S_2[j]$ ($T_1[j],T_2[j]$) must be null as well.
This ensures that the subproblems do not feature additional tours.
If these requirements are not met for some set $\{ S,S_1,S_2,T,T_1,T_2 \}$,
then the table value is not changed.
After populating the table, the algorithm reports the entry $V(p_n,S,T,k)$
for $S,T$ containing the sources and sinks of the master problem.

\paragraph{Correctness.}
We must show that there is a set of paths $\Pi$ with
$\norm{\Pi} \le \norm{\Pi^*} + \E(\Pi^*)$
which can be found by the above algorithm.
As before, it suffices to show that the optimal solution can be
divided into windows connected by $x$-monotone paths.

In the analysis of the $k$-TSP algorithm, we used the fact that any
backward-facing edge contributes its entire length to the excess. This
does not hold in the $(m,k)$-TSP case, as the definition of
``backwards'' remains with respect to the $x$-axis, but excess
is measured with respect to the angle of the relevant path,
see Figure~\ref{FGR: mkTSPsegment}.
To address this, we will require the following lemma:

\begin{lemma}\label{lem:projection}
Given parameter $0 \le \gamma \le 1$
(where $\gamma$ is a measure in radians)
and directed path $\pi$ with angle $\phi$ to the $x$-axis and edge-set $E$,
let $E' \subset E$ consist of directed edges with angles in the range
$[\phi-\gamma, \phi+\gamma]$.
Then
we have
\[
   \sum_{e \in E\setminus E'} \| e \|
   \le \frac{24}{11 \gamma^2} \E(\pi).
\]
\end{lemma}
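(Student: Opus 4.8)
The plan is to bound the total length of the "off-angle" edges $E \setminus E'$ by relating it to the excess $\E(\pi) = \norm{\pi} - \norm{t-s}$, where $s,t$ are the endpoints of $\pi$. The key observation is that excess is, up to constants, the gap between the path length and the length of its projection onto the direction of the segment $st$. Concretely, set up coordinates so that the direction of $st$ is the $x$-axis (this only rotates the picture and does not change any length or angle difference). Then $\norm{t-s}$ equals the signed $x$-extent traversed by $\pi$, and since every edge $e$ with angle $\theta_e$ to the $x$-axis has $x$-projection $\norm{e}\cos\theta_e$, while the total signed projection telescopes to exactly $\norm{t-s}$, we get
\[
  \E(\pi) = \sum_{e \in E} \norm{e}(1 - \cos\theta_e) + \Big(\sum_{e\in E}\norm{e}\cos\theta_e - \norm{t-s}\Big)\cdot[\text{sign corrections}],
\]
so more carefully I would write $\norm{t-s} \le \sum_{e}\norm{e}\cos\theta_e$ when all $\cos\theta_e$ could be negative too — the cleanest route is to note $\norm{t-s} = \sum_e \norm{e}\cos\theta_e$ only if we keep signs, and to handle signs by observing that a backward edge (relative to the path's own direction) contributes negatively and so only helps the bound. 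Thus $\E(\pi) \ge \sum_{e\in E}\norm{e}(1-\cos\theta_e)$, where $\theta_e$ is now the angle between $e$ and the path direction $\phi$.

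The next step is the elementary trigonometric estimate: for an edge $e \in E \setminus E'$ we have $|\theta_e| > \gamma$, hence $1 - \cos\theta_e \ge 1 - \cos\gamma$. Using the standard bound $1 - \cos\gamma \ge \tfrac{\gamma^2}{2} - \tfrac{\gamma^4}{24} = \tfrac{\gamma^2}{2}\big(1 - \tfrac{\gamma^2}{12}\big) \ge \tfrac{\gamma^2}{2}\cdot\tfrac{11}{12} = \tfrac{11\gamma^2}{24}$ for $0 \le \gamma \le 1$, we obtain
\[
  \E(\pi) \;\ge\; \sum_{e \in E \setminus E'} \norm{e}(1 - \cos\theta_e) \;\ge\; \frac{11\gamma^2}{24}\sum_{e\in E\setminus E'}\norm{e},
\]
which rearranges to exactly the claimed inequality $\sum_{e\in E\setminus E'}\norm{e} \le \tfrac{24}{11\gamma^2}\E(\pi)$.

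The one place that needs care — and what I expect to be the main obstacle — is justifying the inequality $\E(\pi) \ge \sum_{e\in E}\norm{e}(1-\cos\theta_e)$ cleanly in the presence of sign issues: if some edges point "backward" relative to $\phi$ (i.e.\ $|\theta_e| > \pi/2$), their $x$-projection is negative, and one must be sure this does not break the telescoping identity $\sum_e \norm{e}\cos\theta_e = \norm{t-s}$ (it does not — the identity holds with signs regardless), but one must also note that $\norm{t-s} \le \norm{t-s}$ trivially and that $1-\cos\theta_e \ge 0$ always, so the bound $\E(\pi) = \sum_e\norm{e} - \norm{t-s} = \sum_e \norm{e} - \sum_e\norm{e}\cos\theta_e = \sum_e\norm{e}(1-\cos\theta_e)$ is in fact an exact equality, not just an inequality. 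Once this identity is in place the rest is the routine Taylor estimate above. I would present the identity first, then the per-edge trig bound, then combine.
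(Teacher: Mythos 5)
Your proof is correct and takes essentially the same approach as the paper: both project each edge onto the direction of $st$, observe that the resulting per-edge deficits $\norm{e}(1-\cos\theta_e)$ sum to (at most) $\E(\pi)$, and finish with the Taylor estimate $1-\cos\gamma \ge \tfrac{11}{24}\gamma^2$. The only (cosmetic) difference is that you note the signed projections telescope to give an exact identity, whereas the paper states an inequality via unsigned projection lengths; both versions, like yours, tacitly use the spherical triangle inequality to pass from ``angle of $e$ to the $x$-axis outside $[\phi-\gamma,\phi+\gamma]$'' to ``angle between $e$ and $st$ at least $\gamma$.''
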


\begin{proof}
For edge $e \in E$, let $p(e)$ be the length of the projection of $e$
onto segment $st$, where $s,t$ are the endpoints of $\pi$.
We can charge each edge $e$ a share of the excess as follows.
Define $\E(e) = \|e\| - p(e)$;
then by the triangle inequality $\norm{s-t} \leq \sum_{e\in E} p(e)$,
and thus $\sum_{e \in E} \E(e) = \norm{\pi} - \sum_{e \in E} p(e)  \le \E(\pi)$.
Now consider an edge $e \in E\setminus E'$.
Recalling the Taylor expansion
$\cos(x) = 1 - \frac{x^2}{2!} + \frac{x^4}{4!} - \ldots$,
we have that
\[
p(e)
\le \| e\| \cos (\gamma)
\le \| e\| \left( 1- \frac{\gamma^2}{2} + \frac{\gamma^4}{24} \right)
\le \| e\| \left( 1-\frac{11}{24}\gamma^2 \right),
\]
and so
$ \E(e)
=\|e\| - p(e)
\ge \frac{11}{24}\gamma^2 \| e\|$.
It follows that
\[
\frac{11 \gamma^2}{24} \sum_{e \in E \setminus E'} \norm{e}
\le \sum_{e \in E \setminus E'} \E(e)
\le \E(\pi).
\]
\end{proof}

Now take optimal tour $\Pi^*$, and let $\W^*$ be defined as in Section \ref{sec:sub-ktsp},
that is consisting of mergers of maximal windows which together
cover all backward-facing edges (where the direction is defined
with respect to the $x$-axis).
The edges not in windows of $\W^*$ constitute forward-facing paths.
Now consider some window $\w_{a,b} \in \W^*$, and we will show that we
can afford to run Arora's $(m,k)$-TSP on this window
with sufficiently small parameter $\delta'$.

We begin with the set $E_b$ of backward-facing edges of $\Pi^*$.
As the angle of all paths in $\Pi^*$ was shown above to be in the range
$[0,\frac{\bpi}{2} - \frac{1}{m'}]$ radians
(and backward-facing edges necessarily have angle greater than $\frac{\bpi}{2}$)
we can apply Lemma \ref{lem:projection} with parameter
$\gamma = \frac{1}{m'}$,
and conclude that
\[
\norm{E_b}
\le \frac{24(m')^2}{11}\E(\Pi^*) = O (m^3) \cdot \E(\Pi^*).
\]

We now turn to the set $E_f$ of forward-facing edges of $\Pi^*$.
Set $E'_f \subset E_f$ will contain edges whose angle is very close
to the angle of their path. More precisely, let path
$\pi^*_i$ have angle $\phi_i$.
$E'_f$ includes every edge of every path $\pi^*_i$
with angle in the range
$\left[ \phi_i - \frac{1}{2m'}, \phi_i + \frac{1}{2m'} \right]$.
Now consider edges of $E_f \setminus E'_f$:
Applying Lemma \ref{lem:projection} with parameter
$\gamma = \frac{1}{2m'}$,
we have that
\[
\norm{E_f \setminus E'_f}
\le \frac{24(2m')^2}{11}\E(\Pi^*)
= O(m^3) \cdot \E(\Pi^*).
\]

Finally, we now turn to the set $E'_f$.
First recall that by the Taylor expansion,
$\sin(x) = x - \frac{x^3}{3!} + \ldots$
Each edge in $E_f'$ accounts for a
progression in the $x$-direction of at least
\[
\norm{e} \cos \left( \frac{\bpi}{2} - \frac{1}{2m'} \right)
= \norm{e} \sin \left( \frac{1}{2m'} \right)
\ge \norm{e} \left( \frac{1}{2m'} - \frac{1}{6(2m')^3} \right)
> \norm{e} \frac{1}{3m'}.
\]
Now consider window $\w_{a,b} \in \W^*$,
and its associated paths
$\pi^*_i(c^*_i(\w_{a,b}), d^*_i(\w_{a,b}))$.
Clearly $\pi^*_i(c^*_i(\w_{a,b}), d^*_i(\w_{a,b}))$
cannot progress in the $x$-direction inside the
window for more than its length without heading backwards,
and so
\[
\frac{1}{3m'} \norm{\pi^*_i(c^*_i(\w_{a,b}), d^*_i(\w_{a,b})) \cap E'_f}
\le |b[0]-a[0]| + \norm{\pi^*_i(c^*_i(\w_{a,b}), d^*_i(\w_{a,b})) \cap E_b}.
\]

Now recall that by construction, each window $\w_{a,b}\in \W^*$ contains
backward-facing edges whose lengths sum to at least the window length, and so
$\sum_{\w_{a,b}\in \W^*} |b[0]-a[0]| \le \norm{E_b}$.
Applying Arora's algorithm with parameter $\delta'$
on each of the non-overlapping windows in $\W^*$
will approximate the optimum $\norm{\Pi^*}$ within total additive error
\begin{align*}
  \sum_{\w_{a,b}\in \W^*} & \sum_{i=1}^m \delta'
    \norm{\pi^*_i(c^*_i(\w_{a,b}),d^*_i(\w_{a,b}))}
  \\
  & \le \delta' \Big(
	\norm{E_b} + \norm{E_f \setminus E'_f}
	+ \sum_{\w_{a,b}\in \W^*} \sum_{i=1}^m \norm{E'_f \cap \pi^*_i(c^*_i(\w_{a,b}), d^*_i(\w_{a,b})) }
	\Big)
  \\
  & \le \delta' \Big(
	\norm{E_b} + \norm{E_f \setminus E'_f}
	+ \sum_{\w_{a,b}\in \W^*} \sum_{i=1}^m
	3m' \Big[ |b[0]-a[0]| + \norm{E_b \cap \pi^*_i(c^*_i(\w_{a,b}), d^*_i(\w_{a,b})) } \Big]
	\Big)
  \\
  & \le \delta' \Big(
	\norm{E_b} + \norm{E_f \setminus E'_f}
	3m^{2.5}\cdot 2 \norm{E_b}
	\Big)
  \\
  & = O(\delta' m^{5.5} \E(\Pi^*)).
\end{align*}
So we can afford to execute Arora's $(m,k)$-TSP algorithm with
parameter $\delta' = \frac{c \delta}{m^{5.5}}$ for suitable constant $c>1$.

\paragraph{Running time.}
The table $V$ has $n^{O(m)}$ entries, and computing each entry requires
consulting $n^{O(m)}$ other entries.
In addition, one has to invoke $n^{O(m)}$ times Arora's $k$-TSP algorithm
(modified as explained in Section~\ref{sec:prelims} to find $m$ tours),
and each of these is executed in time
$k^2 n(2^m\log n)^{(d/\delta')^{O(d)}}$.
Plugging our $\delta' = \frac{c \delta}{m^{5.5}}$ yields total running time
$n^{O(m)}(\log n)^{(md/\delta)^{O(d)}}
= n^{O(m)} 2^{(md/\delta)^{O(d)}}$,
which completes the proof of Theorem~\ref{THM: PTAS for (m,k)-TSP}.

\subsection{Auxiliary lemma}

Above we required the following lemma.

\begin{lemma}\label{lem:direction}
For every set of unit-length vectors $v_1, \ldots , v_m\in \R^d$,
there exists a unit-length $x\in \R^d$ (direction in space)
and signs $\sigma_1,\ldots,\sigma_m \in \set{\pm1}$
satisfying

\[
  \forall i\in[m],
  \quad
  \tuple{x, \sigma_iv_i} \geq \frac{1}{4m^{3/2}}
  \text{\quad and thus\quad }
  0 \leq \vecangle{x, \sigma_iv_i} \leq \frac{\bpi}{2}- \frac{1}{4 m^{3/2}} .
\]
The vector and signs may be found with constant probability in time $O(md)$.

Moreover, there exists a deterministic algorithm that runs in time $m^{O(d)}$ and
finds a vector and signs satisfying

\[
  \forall i\in[m],
  \quad
  \tuple{x, \sigma_iv_i} \geq \frac{1}{8m^{3/2}}
  \text{\quad and thus\quad }
  0 \leq \vecangle{x, \sigma_iv_i} \leq \frac{\bpi}{2}- \frac{1}{8 m^{3/2}} .
\]

\end{lemma}


\begin{proof}
Let $y\in\R^d$ be a random vector where each entry is an iid Gaussian $N(0,1)$,
i.e., chosen from the distribution $y\sim N(0,I_d)$.
Now fix $i\in[m]$.
As the Gaussian distribution is 2-stable,
the inner product $\tuple{y,v_i}$ is distributed as
$N(0,\norm{y}^2=1)$.
For a standard Gaussian $g \sim N(0,1)$, since the probability density function of the distribution
has maximum value $\frac{1}{\sqrt{2\pi}}$, we have that

\[
  \forall \gamma\in(0,1),
  \quad
  \Pr_g \big( \abs{g} \in [0,\gamma] \big)
  \le 2\gamma/ \sqrt{2 \pi}
  < \gamma.
\]
Plugging in $\gamma=\tfrac{1}{2m}$, we have that
\[
  \Pr_y\big( \abs{\tuple{y,v_i}} \le \tfrac{1}{2m} \big)
  < \tfrac{1}{2m}.
\]
Now since $\E[\norm{y}^2]=d$,
Markov's inequality gives that $\Pr[\norm{y}^2\ge 4d] \le 1/4$.
Applying a union bound over the above $m$ events along with
the bound on $\norm{y}^2$,
we have that with probability at least $1/4$ all these events fail,
in which case $x=y/\norm{y}$ is a unit-length vector (in same direction as $y$)
satisfying
\[
  \forall i\in[m],
  \quad
  \abs{\tuple{x,v_i}}
  >   \tfrac{1}{2m \cdot \norm{y}}
  \ge \tfrac{1}{4 m\sqrt{d}} .
\]
Finally, for each $i\in[m]$ we can pick a sign $\sigma_i\in\set{\pm1}$
such that $\tuple{x,\sigma_i v_i} = \abs{\tuple{x,v_i}} \ge 0$.
We may assume that $d\leq m$,
as otherwise we can restrict attention to the span of the vectors,
and conclude the required
$\tuple{x,\sigma_i v_i} \ge \tfrac{1}{4 m^{3/2}}$.
To bound the angle,
let $\theta_i := \tfrac{\bpi}{2} - \vecangle{x, \sigma_iv_i}$
and then
$
  \tfrac{1}{4 m^{3/2}}
  \le \tuple{x,\sigma_i v_i}
  = \cos(\tfrac{\bpi}{2}-\theta_i)
  = \sin(\theta_i)
  \leq \theta_i
$,
where the last inequality relies on observing that
$\theta_i \in [0,\tfrac{\bpi}{2}]$.

The above immediately implies a randomize algorithm that runs in time $O(md)$.
The constant success probability is $1/4$,
but it can be amplified by only $O(1)$ independent repetitions.

For a deterministic algorithm, consider all the points of
a grid with spacing $\tfrac{1}{16 m^{3/2} \sqrt{d}}$ in $[-1,1]^d$.
The diameter of each grid cell is clearly $\tfrac{1}{16 m^{3/2}}$.
Let the point-set $R$ include these grid points,
and let $R'$ be the points of $R$ after normalizing each to have unit length.
Define $x$ to be the unit-length vector whose existence is proved above.
By construction, $x$ is within distance $\tfrac{1}{16 m^{3/2}}$
of some grid point $x' \in R$,
and this $x'$ is within distance $\tfrac{1}{16 m^{3/2}}$
of its normalized vector $x'' \in R'$.
It follows that for all $i$, we have
$|\tuple{v_i,x}-\tuple{v_i,x''}| = |\tuple{v_i,x-x''}| \le \norm{v_i}\cdot \norm{x-x''} \le \tfrac{1}{8 m^{3/2}}$,
and picking a sign $\sigma_i\in\set{\pm1}$ appropriately gives 
$\tuple{\sigma_iv_i,x''}
 = |\tuple{v_i,x''}| 
\geq |\tuple{v_i,x}| - \tfrac{1}{8 m^{3/2}}
\geq \tfrac{1}{8 m^{3/2}}$,
as required.
The algorithm locates this $x''$ or another suitable vector
by enumerating over all normalized grid points, whose number is
$|R'| = |R| 
\leq (2 \cdot 16 m^{3/2} \sqrt{d})^{d}
\leq (2 \cdot 16 m^2)^{d}$,
and for each of them computing the $m$ inner products,
which overall takes time $m^{O(d)}$. 
\end{proof}

\section{A PTAS for Orienteering}\label{sec:orient}

Having shown in Theorem \ref{THM: PTAS for (m,k)-TSP} how to compute a
$\delta$-excess-approximation to an optimal
$(m,k)$-TSP tour, we can use this algorithm as
a subroutine to solve the orienteering problem.
As in \cite{CH08}, we show how to reduce the orienteering
problem to $n^{O(1/\delta)}$ instances of the
$(O(1/\delta), k)$-TSP problem.

\begin{lemma}\label{lem:reduction}
A $(1-\delta)$-approximation to orienteering problem on $n$-point set $P$, a budget $\B$ and a starting point $s$,
can be computed by making
$n^{O(1/\delta)}$
queries to an $O(\delta)$-excess-approximation oracle for $(m,k)$-TSP,
with parameters
$m=O(1/\delta)$
and
$k=O(k_{\opt})$, where $k_{\opt}$ denotes the number of points visited by an optimal path.
\end{lemma}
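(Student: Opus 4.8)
The plan is to follow the standard reduction of Blum et al.~\cite{BCKLMM07}, as refined by Chen and Har-Peled~\cite{CH08}, which cuts the unknown optimal orienteering path $\pi^*$ into $m = \Theta(1/\delta)$ consecutive pieces of (roughly) equal cardinality, and observes that the "correct" cut points, together with the induced budget split, can be found by exhaustive search over a polynomial-sized family of candidates. Concretely, let $\pi^* = \langle s = q_0, q_1, \ldots, q_{k_{\opt}} \rangle$ be an optimal rooted path of length at most $\B$. For a parameter $m = \lceil c/\delta \rceil$, mark the $m-1$ intermediate waypoints $w_1, \ldots, w_{m-1}$, where $w_j$ is the $\lceil j k_{\opt}/m \rceil$-th point along $\pi^*$; setting $w_0 = s$ and letting $w_m$ be the last point, this partitions $\pi^*$ into $m$ subpaths $\sigma_1, \ldots, \sigma_m$, where $\sigma_j$ goes from $w_{j-1}$ to $w_j$ and visits roughly $k_{\opt}/m$ points. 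Crucially, the number of \emph{distinct} possibilities for the tuple of waypoints $(w_1, \ldots, w_{m-1})$ is at most $n^{m-1} = n^{O(1/\delta)}$, so the algorithm can enumerate all of them; one of these guesses equals the true waypoints of $\pi^*$.

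The second ingredient is guessing how the length budget $\B$ is apportioned among the $m$ pieces. First I would round the lengths: since $\norm{\sigma_j} \le \B$, we may discretize the length axis into $O(mn/\delta) = \poly(n)$ buckets of width $\delta\B/(mn)$ and guess, for each $j$, the bucket $\beta_j$ containing $\norm{\sigma_j}$; this contributes another $\poly(n)^{m} = n^{O(1/\delta)}$ factor (absorbed into the overall query bound) and loses only an additive $\sum_j \delta\B/(mn) \le \delta\B/n$ in total length, which is negligible. For each guess of waypoints and of length buckets, the algorithm calls the $O(\delta)$-excess-approximation oracle for $(m,k)$-TSP with the $m$ source–sink pairs $(w_{j-1}, w_j)$ and target cardinality $k$, where $k$ ranges over $O(k_{\opt})$ and is itself found by binary/linear search (or simply: ask for the largest $k$ such that the returned total length respects the budget). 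Here $k = O(k_{\opt})$ as required; note $k_{\opt}$ need not be known exactly, since the algorithm can try all values $k \le n$ and this is subsumed in the $n^{O(1/\delta)}$ query count.

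The point of using the \emph{excess}-approximation oracle rather than an ordinary $(1+\delta)$-approximation is that the subpaths $\sigma_j$ of a near-optimal orienteering path may individually be very short relative to $\norm{w_j - w_{j-1}}$, so a multiplicative $(1+\delta)$ error on each $\norm{\sigma_j}$ could blow the budget; but the excess error $\delta \cdot \E(\Pi^*)$ telescopes nicely. Indeed, for the correct guess the oracle returns $m$ paths $\Pi$ with $P(\Pi) = k = k_{\opt}$ (or $\ge (1-\delta)k_{\opt}$ after we also allow discarding one piece, see below) and $\norm{\Pi} \le \norm{\Pi^*} + O(\delta)\E(\Pi^*)$, where $\Pi^* = (\sigma_1, \ldots, \sigma_m)$. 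Since $\E(\Pi^*) = \sum_j (\norm{\sigma_j} - \norm{w_j - w_{j-1}}) \le \norm{\pi^*} - \norm{\sum_j (w_j - w_{j-1})}$-type bound $\le \norm{\pi^*} \le \B$ (the straight segments $w_{j-1}w_j$ have nonnegative total length), we get $\norm{\Pi} \le \norm{\pi^*} + O(\delta)\B$. Concatenating the $m$ returned subpaths at the waypoints yields a single rooted path $\pi$ visiting $k_{\opt}$ points with $\norm{\pi} \le (1 + O(\delta))\B$. To restore the hard budget constraint $\norm{\pi} \le \B$ exactly, I would use the standard trick of deleting the single piece $\sigma_{j}$ of largest length: since there are $m = \Theta(1/\delta)$ pieces of comparable cardinality $\approx k_{\opt}/m$, dropping one loses only $\approx k_{\opt}/m = O(\delta k_{\opt})$ points, while it can be shown (again by a guessing/shortcutting argument on the two neighbors of the dropped piece) that removing the longest piece brings the length back under $\B$; rescaling $\delta$ by the appropriate constant gives a genuine $(1-\delta)$-approximation.

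The main obstacle is the budget bookkeeping in this last step: one must argue carefully that deleting the longest of the $m$ subpaths (and reconnecting its two neighbours by a direct edge $w_{j-1}w_{j+1}$, whose length is at most $\norm{\sigma_j}$ by the triangle inequality) simultaneously (i) reduces the total length below $\B$ despite the accumulated $O(\delta)\B$ slack, and (ii) loses at most $\delta k_{\opt}$ points. Item (ii) is immediate from the near-equipartition of cardinalities; item (i) requires that the longest piece has length $\ge \Omega(\delta)\cdot(\text{total excess})$ so that removing it absorbs the oracle's additive error — this holds because the $m$ pieces' excesses sum to $\le \E(\Pi^*)$, so the largest is $\ge \E(\Pi^*)/m = \Theta(\delta)\E(\Pi^*)$, which dominates the $O(\delta)\E(\Pi^*)$ error once constants are chosen appropriately. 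Everything else is routine: the query count is $n^{m-1} \cdot \poly(n)^m \cdot O(n) = n^{O(1/\delta)}$, and $m = O(1/\delta)$, $k = O(k_{\opt})$ as claimed.
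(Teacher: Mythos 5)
Your high-level framework matches the paper's: cut $\pi^*$ into $m=\Theta(1/\delta)$ consecutive pieces of roughly equal cardinality at $m$ guessed waypoints, giving $n^{O(1/\delta)}$ candidate skeletons, and feed the skeleton pairs to an excess-approximation oracle for $(m,k)$-TSP while ranging over $k$. Two points of divergence are worth flagging.

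First, your budget-discretization step (guessing, for each piece, a bucket of width $\delta\B/(mn)$) is unnecessary. The $(m,k)$-TSP oracle already optimizes the total length jointly over all $m$ paths, so there is no per-piece budget to guess; the paper makes no such discretization and none is needed.

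Second, and this is the genuine gap, your mechanism for restoring $\norm{\pi}\le\B$ is a post-hoc ``delete the longest returned piece,'' and the argument you give for it does not go through as written. The sentence ``the $m$ pieces' excesses sum to $\le\E(\Pi^*)$, so the largest is $\ge\E(\Pi^*)/m$'' is a non sequitur (an upper bound on a sum gives no lower bound on the maximum), and more fundamentally it conflates the pieces of the unknown $\Pi^*$ (whose excesses sum exactly to $\E(\Pi^*)$) with the pieces the oracle actually returns, which are what your algorithm would be deleting and whose excesses could in principle be distributed arbitrarily. The paper avoids this entirely by choosing the discarded piece \emph{in the analysis, before the oracle is invoked}: it selects the index $\nu$ with $\E_\nu\ge\frac1m\sum_i\E_i$, forms the ``virtual'' path $\pi'$ that replaces the $\nu$-th subpath of $\pi^*$ by a straight segment (losing only $\le k_{\opt}/m$ points), and then runs the oracle on the full skeleton of $\pi^*$ with $k\approx(1-1/m)k_{\opt}$, using accuracy $\approx 1/m$. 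Since $\pi'$'s pieces are a feasible solution for this instance with excess $\E_{\pi'}\le\sum_i\E_i-\E_\nu$, the oracle's additive error is at most $\frac{1}{m+1}\bigl(\sum_i\E_i-\E_\nu\bigr)$, and the saving $\E_\nu$ from the shortcut dominates this error precisely because $\E_\nu\ge\frac1m\sum_i\E_i$. Thus the oracle's output is \emph{automatically} within budget for the correct guess; no deletion step is performed. Your route could in principle be repaired (one can show that if the returned $\Pi$ exceeds $\B$ then $\E(\Pi)>\E(\Pi^*)$, so the piece of \emph{maximum excess} in the returned solution saves enough), but that requires a case analysis you do not do, and in any case the criterion must be maximum excess, not ``largest length,'' since the saving from shortcutting a piece is its excess, not its length.
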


Then Theorem \ref{THM: PTAS for Orienteering} follows from Lemma \ref{lem:reduction},
with oracle queries executed by the algorithm of Theorem~\ref{THM: PTAS for (m,k)-TSP}.

\begin{proof}

    Let $\pi^*$ be an optimal rooted orienteering path starting at $s$ of length at most $\B$ that visits $k_{\opt}$ points of $P$, let $\pi^*(i,j)=\langle p_i,\ldots,p_j \rangle$ be the portion of the path $\pi^*$ from $p_i$ to $p_j$,
    and let $\E(i,j)=\norm{\pi^*(i,j)}-\norm{p_i-p_j}$ be its excess.
    Set $m=\lfloor 1/\delta\rfloor$, and let $\alpha_i=\lceil (i-1)(k_{\opt}-1)/m\rceil +1$ for every $1\leq i\leq m+1$.
    By definition, we have $\alpha_1=1$ and $\alpha_{m+1}=k_{\opt}$. Furthermore, each subpath $\pi^*(\alpha_i,\alpha_{i+1})$ visits
    \[
      \alpha_{i+1}-\alpha_i-1= (\lceil i(k_{\opt}-1)/m\rceil +1)-(\lceil (i-1)(k_{\opt}-1)/m\rceil +1)-1\leq \lfloor(k_{\opt}-1)/m\rfloor
    \]
    points, excluding the endpoints $p_{\alpha_i}$ and $p_{\alpha_{i+1}}$.

    Consider the subpaths $\pi^*(\alpha_1,\alpha_{2}),\ldots,\pi^*(\alpha_{m},\alpha_{m+1})$ of $\pi^*$ and their
    respective excesses
    \[
	\E_1=\E(\alpha_1,\alpha_2), \ldots, \E_{ m} = \E(\alpha_{m},\alpha_{m +1}).
    \]
    Clearly, there exists an index $\nu$, $1\leq\nu \leq m$, such that $\E_{\nu}\geq \frac{1}{m}(\sum_{i=1}^{m}\E_i)$. By connecting the vertex $p_{\alpha_{\nu}}$ directly to the vertex $p_{\alpha_{\nu+1}}$ in $\pi^*$, we obtain a new path
    $\pi'=\langle p_1,p_2\ldots,p_{\alpha_{\nu}},p_{\alpha_{\nu+1}},p_{\alpha_{\nu+1}+1},\ldots,p_{k_{\opt}} \rangle$. Observe that $ \norm{\pi'}=\norm{\pi^*}-\E_{\nu}$, and
    as noted above,
    $\pi'$ visits at least
    \[
    k_{\opt}-(\alpha_{\nu+1}-\alpha_{\nu}-1)\geq k_{\opt}- \lfloor(k_{\opt}-1)/ m \rfloor\geq (1-1/m)k_{\opt}
    \]
    points of $P$.


    By applying an $(m,(1-1/m)k_{\opt})$-TSP oracle on the $m$ pairs $(s_i=p_{\alpha_i}, t_i=p_{\alpha_{i+1}})$ for every $i\in[m]$
    with accuracy parameter $1/m$, one can compute a path $\hat {\pi}$ that visits at least $(1-1/m)k_{\opt}\geq (1-\delta)k_{\opt}$ points of $P$, of length
    \begin{align*}
      \norm{\hat{\pi}}  & \leq \sum_{i=1}^{m}(\norm{\pi'(\alpha_i,\alpha_{i+1})}+\frac{1}{m}\E_i)\\
		                & \leq \norm{\pi'}+\E_{\nu} \\
                        & = \norm{\pi^*} \\
		                & \leq \B.
    \end{align*}

    As the value of $k_{\opt}$ is not known in advance,
    the algorithm tries all possible values of $k$ from $1$ to $n$,
    returning the maximum value $k'$ for which it finds a tour within budget $\B$
    (that is, the algorithm terminates at the failed attempt to find
    a tour visiting $k'+1$ points). As we proved above, $(1-\delta)k_{\opt} \leq k'\leq k_{\opt}$.
    In addition, since we do not know the optimal orienteering path $\pi^*$ in advance,
    we guess the $m=\lfloor 1/\delta \rfloor$ points $p_{\alpha_i}$, which gives
    $n^{O(1/\delta)}$ queries of $(m,k)$-TSP.
\end{proof}

\paragraph{Acknowledgements.} We thank Rajesh Chitnis for helpful discussions.

{
\ifprocs
\bibliographystyle{plainurl}
\else
\bibliographystyle{alphaurlinit}
\fi
\bibliography{havana}

\newcommand{\etalchar}[1]{$^{#1}$}
\begin{thebibliography}{BBCM04}

\bibitem[AMN98]{AMN98}
E.~M. Arkin, J.~S.~B. Mitchell, and G.~Narasimhan.
\newblock Resource-constrained geometric network optimization.
\newblock In {\em Proceedings of the Fourteenth Annual Symposium on
  Computational Geometry}, pages 307--316, 1998.
\newblock \href {http://dx.doi.org/10.1145/276884.276919}
  {\path{doi:10.1145/276884.276919}}.

\bibitem[Aro98]{Arora98}
S.~Arora.
\newblock Polynomial time approximation schemes for {E}uclidean traveling
  salesman and other geometric problems.
\newblock {\em J. {ACM}}, 45(5):753--782, 1998.
\newblock \href {http://dx.doi.org/10.1145/290179.290180}
  {\path{doi:10.1145/290179.290180}}.

\bibitem[ASV14]{archetti14}
C.~Archetti, M.~G. Speranza, and D.~Vigo.
\newblock Chapter 10: Vehicle routing problems with profits.
\newblock In {\em Vehicle Routing: Problems, Methods, and Applications, Second
  Edition}, pages 273--297. SIAM, 2014.
\newblock \href {http://dx.doi.org/10.1137/1.9781611973594.ch10}
  {\path{doi:10.1137/1.9781611973594.ch10}}.

\bibitem[Bal95]{Balas95}
E.~Balas.
\newblock The prize collecting traveling salesman problem: {II.} polyhedral
  results.
\newblock {\em Networks}, 25(4):199--216, 1995.
\newblock \href {http://dx.doi.org/10.1002/net.3230250406}
  {\path{doi:10.1002/net.3230250406}}.

\bibitem[BBCM04]{BBCM04}
N.~Bansal, A.~Blum, S.~Chawla, and A.~Meyerson.
\newblock Approximation algorithms for deadline-{TSP} and vehicle routing with
  time-windows.
\newblock In {\em Proceedings of the 36th Annual {ACM} Symposium on Theory of
  Computing}, pages 166--174, 2004.
\newblock \href {http://dx.doi.org/10.1145/1007352.1007385}
  {\path{doi:10.1145/1007352.1007385}}.

\bibitem[BCK{\etalchar{+}}07]{BCKLMM07}
A.~Blum, S.~Chawla, D.~R. Karger, T.~Lane, A.~Meyerson, and M.~Minkoff.
\newblock Approximation algorithms for orienteering and discounted-reward
  {TSP}.
\newblock {\em {SIAM} J. Comput.}, 37(2):653--670, 2007.
\newblock \href {http://dx.doi.org/10.1137/050645464}
  {\path{doi:10.1137/050645464}}.

\bibitem[CGRT03]{CGRT03}
K.~Chaudhuri, B.~Godfrey, S.~Rao, and K.~Talwar.
\newblock Paths, trees, and minimum latency tours.
\newblock In {\em 44th Symposium on Foundations of Computer Science {(FOCS}
  2003),}, pages 36--45, 2003.
\newblock \href {http://dx.doi.org/10.1109/SFCS.2003.1238179}
  {\path{doi:10.1109/SFCS.2003.1238179}}.

\bibitem[CH08]{CH08}
K.~Chen and S.~Har{-}Peled.
\newblock The {E}uclidean orienteering problem revisited.
\newblock {\em {SIAM} J. Comput.}, 38(1):385--397, 2008.
\newblock \href {http://dx.doi.org/10.1137/060667839}
  {\path{doi:10.1137/060667839}}.

\bibitem[CK04]{CK04}
C.~Chekuri and A.~Kumar.
\newblock Maximum coverage problem with group budget constraints and
  applications.
\newblock In {\em Approximation, Randomization, and Combinatorial Optimization,
  Algorithms and Techniques, 7th International Workshop on Approximation
  Algorithms for Combinatorial Optimization Problems, {APPROX} 2004, and 8th
  International Workshop on Randomization and Computation, {RANDOM} 2004,
  Cambridge,}, pages 72--83, 2004.
\newblock \href {http://dx.doi.org/10.1007/978-3-540-27821-4\_7}
  {\path{doi:10.1007/978-3-540-27821-4\_7}}.

\bibitem[CKP12]{CKP12}
C.~Chekuri, N.~Korula, and M.~P{\'{a}}l.
\newblock Improved algorithms for orienteering and related problems.
\newblock {\em {ACM} Trans. Algorithms}, 8(3):23:1--23:27, 2012.
\newblock \href {http://dx.doi.org/10.1145/2229163.2229167}
  {\path{doi:10.1145/2229163.2229167}}.

\bibitem[CP05]{CP05}
C.~Chekuri and M.~P{\'{a}}l.
\newblock A recursive greedy algorithm for walks in directed graphs.
\newblock In {\em 46th Annual {IEEE} Symposium on Foundations of Computer
  Science {(FOCS} 2005), 23-25 October 2005, Pittsburgh, PA, USA, Proceedings},
  pages 245--253, 2005.
\newblock \href {http://dx.doi.org/10.1109/SFCS.2005.9}
  {\path{doi:10.1109/SFCS.2005.9}}.

\bibitem[FS17]{FS17}
Z.~Friggstad and C.~Swamy.
\newblock Compact, provably-good lps for orienteering and regret-bounded
  vehicle routing.
\newblock In {\em Integer Programming and Combinatorial Optimization - 19th
  International Conference, {IPCO} 2017, Proceedings}, volume 10328 of {\em
  Lecture Notes in Computer Science}, pages 199--211. Springer, 2017.
\newblock \href {http://dx.doi.org/10.1007/978-3-319-59250-3\_17}
  {\path{doi:10.1007/978-3-319-59250-3\_17}}.

\bibitem[GGJ76]{GGJ76}
M.~R. Garey, R.~L. Graham, and D.~S. Johnson.
\newblock Some {NP}-complete geometric problems.
\newblock In {\em Proceedings of the 8th Annual {ACM} Symposium on Theory of
  Computing}, pages 10--22, 1976.
\newblock \href {http://dx.doi.org/10.1145/800113.803626}
  {\path{doi:10.1145/800113.803626}}.

\bibitem[GKNR15]{GKNR15}
A.~Gupta, R.~Krishnaswamy, V.~Nagarajan, and R.~Ravi.
\newblock Running errands in time: Approximation algorithms for stochastic
  orienteering.
\newblock {\em Math. Oper. Res.}, 40(1):56--79, 2015.
\newblock \href {http://dx.doi.org/10.1287/moor.2014.0656}
  {\path{doi:10.1287/moor.2014.0656}}.

\bibitem[GLV87]{GLV87}
B.~L. Golden, L.~Levy, and R.~Vohra.
\newblock The orienteering problem.
\newblock {\em Naval Research Logistics (NRL)}, 34(3):307--318, 1987.
\newblock \href
  {http://dx.doi.org/10.1002/1520-6750(198706)34:3<307::AID-NAV3220340302>3.0.CO;2-D}
  {\path{doi:10.1002/1520-6750(198706)34:3<307::AID-NAV3220340302>3.0.CO;2-D}}.

\bibitem[GLV16]{gunawan16}
A.~Gunawan, H.~C. Lau, and P.~Vansteenwegen.
\newblock Orienteering problem: {A} survey of recent variants, solution
  approaches and applications.
\newblock {\em European Journal of Operational Research}, 255(2):315--332,
  2016.
\newblock \href {http://dx.doi.org/10.1016/j.ejor.2016.04.059}
  {\path{doi:10.1016/j.ejor.2016.04.059}}.

\bibitem[Mit99]{Mitchell99}
J.~S.~B. Mitchell.
\newblock Guillotine subdivisions approximate polygonal subdivisions: {A}
  simple polynomial-time approximation scheme for geometric {TSP}, $k$-{MST},
  and related problems.
\newblock {\em {SIAM} J. Comput.}, 28(4):1298--1309, 1999.
\newblock \href {http://dx.doi.org/10.1137/S0097539796309764}
  {\path{doi:10.1137/S0097539796309764}}.

\bibitem[Pap77]{Papadimitriou77}
C.~H. Papadimitriou.
\newblock The {E}uclidean traveling salesman problem is {NP}-complete.
\newblock {\em Theor. Comput. Sci.}, 4(3):237--244, 1977.
\newblock \href {http://dx.doi.org/10.1016/0304-3975(77)90012-3}
  {\path{doi:10.1016/0304-3975(77)90012-3}}.

\bibitem[Tre00]{Trevisan00}
L.~Trevisan.
\newblock When hamming meets {E}uclid: The approximability of geometric {TSP}
  and steiner tree.
\newblock {\em {SIAM} J. Comput.}, 30(2):475--485, 2000.
\newblock \href {http://dx.doi.org/10.1137/S0097539799352735}
  {\path{doi:10.1137/S0097539799352735}}.

\bibitem[TV02]{TV02}
P.~Toth and D.~Vigo, editors.
\newblock {\em The Vehicle Routing Problem}, volume~9 of {\em {SIAM} monographs
  on discrete mathematics and applications}.
\newblock {SIAM}, 2002.
\newblock \href {http://dx.doi.org/10.1137/1.9780898718515}
  {\path{doi:10.1137/1.9780898718515}}.

\end{thebibliography}
}



\end{document}

This is never printed